\newcommand{\pl}{\mathsf{pl}}
\newcommand{\pre}{\mathsf{pre}}
\newcommand{\leftside}{\mathsf{left}}
\newcommand{\cntr}{\mathsf{cntr}}
\newcommand{\lst}{\mathsf{list}}
\newcommand{\len}{\mathsf{len}}
\newcommand{\append}{\mathsf{add}}
\newcommand{\maxPal}{\mathsf{maxPal}}
\newcommand{\rad}{\mathsf{rad}}
\newcommand{\nextPal}{\mathsf{nextPal}}
\newcommand{\lvec}[1]{\overset{{}_{\leftarrow}}{#1}}
\newcommand{\ttl}{\mathsf{ttl}}
\newcommand{\pttl}{\mathsf{pttl}}
\newcommand{\head}{\mathsf{head}}
\newcommand{\size}{\mathsf{size}}
\newcommand{\tail}{\mathsf{tail}}
\newcommand{\PL}{{\mathsf{PL}}}
\newcommand{\PRE}{{\mathsf{PRE}}}
\newcommand{\dead}{{\mathsf{dead}}}
\newcommand{\wait}{{\mathsf{wait}}}
\title{Palindromic k-Factorization in Pure Linear Time}
\author{Mikhail Rubinchik}{Ural Federal University, 
Ekaterinburg, Russia}{mikhail.rubinchik@gmail.com}{}{}
\author{Arseny M. Shur}{Ural Federal University, Ekaterinburg, Russia}{arseny.shur@urfu.ru}{}{}
\authorrunning{M. Rubinchik and A.\,M. Shur} 	
\keywords{stringology, palindrome, palindromic factorization, online algorithm}
\begin{document}

\maketitle

\begin{abstract}
Given a string $s$ of length $n$ over a general alphabet and an integer $k$, the problem is to decide whether $s$ is a concatenation of $k$ nonempty palindromes. Two previously known solutions for this problem work in time $O(kn)$ and $O(n\log n)$ respectively. Here we settle the complexity of this problem in the word-RAM model, presenting an $O(n)$-time online deciding algorithm. The algorithm simultaneously finds the minimum odd number of factors and the minimum even number of factors in a factorization of a string into nonempty palindromes. We also demonstrate how to get an explicit factorization of $s$ into $k$ palindromes with an $O(n)$-time offline postprocessing. 
\end{abstract}

\newpage
\section{Introduction}

\emph{Factorization}, or representation of a string as a concatenation of substrings, is a key tool in both algorithmic and combinatorial studies on strings. For example, both the Lempel--Ziv factorization \cite{LempelZiv} and the Lyndon factorization \cite{CFL58} are ubiquitous in combinatorics on words and stringology. Factorization into palindromes also  attracted the attention of researchers since 1970s. Recall that a string $s = a_1a_2\cdots a_n$ is a \emph{palindrome} if it is equal to its \emph{reversal} $\lvec{s} = a_n\cdots a_2a_1$; \emph{palindromic ($k$-)factorization} is a representation of a string as a concatenation of ($k$) nonempty palindromes. There is a bunch of results concerning palindromic factorization. One of them is a hardness result: deciding the existence of a factorization into distinct palindromes is NP-complete \cite{BGIKKPPS15}; for all other natural types of palindromic factorization, sooner or later  linear-time algorithms in the word-RAM model of computation were designed. Knuth, Morris, and Pratt \cite{KMP77} presented such an algorithm deciding whether a string has a factorization into even-length palindromes. Galil and Seiferas \cite{GaSe78} did the same for the factorization into palindromes of length ${>}1$, and also for 2-, 3-, and 4-factorization. The existence of $k$-factorization was shown to be decidable in $O(kn)$ time~\cite{KosolobovPalk}. Most of the recent results were related to \emph{palindromic length} of a string, which is the minimum number of factors in its palindromic factorization. There are several combinatorics papers, see e.g. \cite{FPZ13,Frid18,Saa17}, studying the conjecture that every aperiodic infinite string has finite factors of arbitrarily big palindromic length. On the algorithmic side, the palindromic length of a string of length $n$ was  shown to be computable in $O(n\log n)$ time~\cite{FiciCo,ISugimotoInenagaBannaiTakeda,RuSh18}.
In \cite{BKRS17}, the optimal $O(n)$ bound was reached, using bit compression and range operations.

A linear-time algorithm for palindromic length gave a hope for better results on palindromic $k$-factorization, because deciding the latter is equivalent to computing the minimum even and minimum odd number of factors in a palindromic factorization. An $O(n\log n)$ algorithm for palindromic length \cite[Prop.\,4.2]{RuSh18} can be transformed into an $O(n\log n)$ algorithm for even/odd palindromic length \cite[Prop.\,4.9]{RuSh18}. However, the properties of palindromic length used in the linear-time algorithm of \cite{BKRS17} do not hold for even/odd palindromic length. In this paper we show how to overcome the technical difficulties and present the following optimal result.  

\begin{theorem}\label{MainTheorem}
There exists an online algorithm deciding, in $O(n)$ time independent of $k$, whether
a length-$n$ input string over a general alphabet admits a palindromic $k$-factorization.
\end{theorem}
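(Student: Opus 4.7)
I would first prove that $s$ admits a palindromic $k$-factorization iff $k\le n$ and either ($k$ is odd with $k\ge p_o$) or ($k$ is even with $k\ge p_e$), where $p_o,p_e$ denote the minimum odd and even palindromic lengths of $s$ (either may be $\infty$). Necessity is immediate. For sufficiency, given a factorization $\mathcal{F}$ of parity matching $k$ and with $|\mathcal{F}|\le n-2$, I would produce a parity-preserving $(+2)$ move. If some part $P\in\mathcal{F}$ has length $\ge 3$, write $P=a Q a$ with $Q$ a palindrome and split $P$ into three palindromes. Otherwise every part has length $\le 2$; since the total length is $n$ and $|\mathcal{F}|\le n-2$, at least two parts have length $2$, and each $aa$ splits into $a,a$, giving $+2$. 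Iterating reaches any count of the correct parity up to $n$. This reduces the problem to computing $p_o$ and $p_e$ online in $O(n)$ time and then answering each $k$ query in $O(1)$.

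\textbf{Stage 2 (DP setup and series decomposition).} Let $f_o[i],f_e[i]$ be the minimum odd/even palindromic length of the prefix of length $i$. The two DPs are coupled by the mutual recurrences
\[
f_o[i]=1+\min\bigl\{f_e[j]:s[j{+}1..i]\text{ is a palindrome}\bigr\},\qquad f_e[i]=1+\min\bigl\{f_o[j]:s[j{+}1..i]\text{ is a palindrome}\bigr\}.
\]
I would maintain the palindromic suffixes of the current prefix via an online palindromic suffix structure (eertree-style), which processes a character over a general alphabet in amortized constant time. The classical periodicity-based fact is that these suffixes partition into $O(\log i)$ arithmetic progressions of lengths, each governed by one period; this \emph{series decomposition} is the engine of both the $O(n\log n)$ algorithm of \cite{RuSh18} and the $O(n)$ algorithm of \cite{BKRS17} for the single-parity case.

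\textbf{Stage 3 (Main obstacle: coupled minima per series).} The hard part, and what I expect to drive the paper, is maintaining per-series minima for both parities in amortized constant time. In the single-parity setting, \cite{BKRS17} exploits that $f$ is convex/monotone along each series, so a bit-packed encoding admits range-min updates with word-parallel operations. For the coupled $(f_o,f_e)$ DP this convexity degrades: within a series the contributing quantity alternates between $f_o$ and $f_e$ according to the parity of the period, so a single convexity invariant no longer holds. My plan is to attach to each series a packed \emph{pair} $(\min f_o,\min f_e)$ over its source positions, together with the parity of the series' common difference, and to re-implement the range-min updates of \cite{BKRS17} on these packed pairs. The central technical task is to show that the amortized potential argument carries over to the coupled setting — that each update still decreases a ``pair-convexity'' invariant by an amount charging the work — so that the total cost remains $O(n)$ regardless of $k$. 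Assuming this potential argument goes through, combining it with Stage 1 and Stage 2 yields the theorem.
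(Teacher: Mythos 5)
Your Stages 1 and 2 match the paper: Stage 1 is exactly the paper's Lemma~\ref{l:kvslen} (with the same $+2$ splitting argument), and Stage 2 is the coupled recurrence \eqref{e:dp} together with the $O(\log n)$ series decomposition that underlies the $O(n\log n)$ algorithm. But Stage 3, which is the entire technical content of the theorem, is left as an assumption ("assuming this potential argument goes through"), and the assumption hides exactly the ideas that the paper had to invent. First, before any word-packed range-min machinery can even be set up, you need an analog of the single-parity fact $|\pl(wa)-\pl(w)|\le 1$ that lets the pair array $\PL$ be stored in $O(1)$ bits per position; the paper proves that $\pl^j(wab)\in\{\pl^{1-j}(wa)\pm 1,\ \pl^{1-j}(w)+1\}$ (Lemma~\ref{l:compress}), and your proposal neither states nor proves any such encoding lemma. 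Second, the genuinely new obstacle is not a degraded "convexity invariant": with the three-case encoding, the functions $f_1,f_2$ have occasional \emph{drops}, and taking the coordinatewise minimum of two validly encoded chunks can produce a chunk that admits no valid encoding at all (the paper's condition 4 fails). The paper's fix is the \emph{smoothing} operation, with two nontrivial supporting facts: smoothing does not change any future DP value (Lemma~\ref{l:smooth}, which needs a factorization argument using the palindrome $s[r{+}1..n]$), and a chunk can be smoothed in $O(1)$ time by a table operation (Lemma~\ref{l:O1smooth}). Your "packed pair plus pair-convexity potential" plan does not address this encoding-validity problem, and no potential-function argument is what makes it work.

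Third, the linear-time accounting in the paper is not an amortized potential that "carries over" from \cite{BKRS17} in the way you describe: it rests on a phase structure tied to the growth of the longest suffix-palindrome, on time-to-live bounds for periods (Lemmas~\ref{l:ttl} and~\ref{l:Pt}, via Fine--Wilf), and on maintaining each $\PRE[p]$ as three separately compressed parts so that it fits in $O(n\log n)$ bits overall and supports $O(1)$-time chunk updates (Lemmas~\ref{l:PREp}, \ref{l:O1rangeupdates}, \ref{l:O1ttl}). None of this is sketched in your proposal, and the paper explicitly warns that the properties used in the single-parity linear algorithm do not hold in the even/odd setting, so "re-implement the range-min updates of \cite{BKRS17} on packed pairs" is not a routine adaptation. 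A minor additional point: an eertree-style structure over a general alphabet does not give amortized $O(1)$ per letter without extra assumptions; the paper uses the Manacher-based palindromic iterator precisely to get comparison-only $O(1)$ amortized updates. In short, your reduction and DP framework are right, but the proof of the theorem is missing its core: the pair compression lemma, the smoothing operation with its correctness and $O(1)$ implementation, and the phase/ttl-based linear-time analysis.
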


In addition, we show how to find a palindromic $k$-factorization explicitly within the linear time (but offline).
The paper is organized as follows: after preliminaries on strings and palindromes, we give an $O(n\log n)$-algorithm for computing the even/odd palindromic length of a string in Section 2. The main section is Section~3 where a linear-time algorithm is described. We conclude with an algorithm finding an explicit $k$-factorization (Section~4). Asterisks $\mathbf{(*)}$ indicate that proofs or additional details can be found in Appendix.

\subparagraph*{Preliminaries.}
For any $i,j$, $[i..j]$ denotes the range $\{k\in \mathbb{Z} \colon i \le k \le j\}$; we abbreviate $[i..i]$ as $[i]$. We use ranges, in particular, to index strings and arrays. 
For strings we write $s=s[1..n]$, where $n = |s|$ is the length of $s$. The \emph{empty string} is denoted by $\varepsilon$. A string $u$ is a \emph{substring} of $s$ if $u = s[i..j]$ for some $i$, $j$ ($j<i$ means $u=\varepsilon$). Such pair $(i,j)$ is not necessarily unique; $i$ specifies an \emph{occurrence} of $u$ at \emph{position} $i$. A substring $s[1..j]$ (resp., $s[i..n]$) is a \emph{prefix} (resp. \emph{suffix}) of $s$. An integer $p\in[1..n]$ is a \emph{period} of $s$ if $s[1..n{-}p] = s[p{+}1..n]$. We write $s^k$ for the concatenation of $k$ copies of $s$ (thus $s^k$ has period $|s|$).  We write $\lvec{s}$ for a string or array obtained from $s$ by reversing the order of elements (thus the equality $s=\lvec{s}$ defines a \emph{palindrome}). A substring (resp. suffix, prefix) that is a palindrome is called a \emph{subpalindrome} (resp. \emph{suffix-palindrome}, \emph{prefix-palindrome}). If $s[i..j]$ is a subpalindrome of $s$, the numbers $(j + i) / 2$ and $\lfloor (j - i + 1) /2 \rfloor$ are respectively the \emph{center} and the \emph{radius} of $s[i..j]$. Subpalindromes of odd (even) length have integer (resp., half-integer) centers. 

A ($k$-)\emph{pal-factorization} of $s$ is a representation $s=w_1\cdots w_k$, where $w_1,\ldots,w_k$ are palindromes. The minimum $k$ such that a $k$-pal-factorization of $s$ exists is the \emph{palindromic length} of $s$, denoted by $\pl(s)$. We also introduce the \emph{even palindromic length} $\pl^0(s)$ and \emph{odd palindromic length} $\pl^1(s)$ as the minimum even (resp., odd) $k$ among all such factorizations of $s$. If $s$ has no pal-factorization with even (odd) number of factors, we write  $\pl^0(s)=\infty$ (resp., $\pl^1(s)=\infty$). For example, $\pl(abcba)=\pl^1(abcba)=1$, $\pl^0(abcba)=\infty$; $\pl(acaaba)=\pl^0(acaaba)=2$, $\pl^1(acaaba)=5$.
A pal-factorization  $s=w_1\cdots w_k$ with $k\le |s|-2$ can be easily transformed into a $(k{+}2)$-pal-factorization: either $|w_i|\ge3$ for some $i$, so $w_i=aua$ for some letter $a$ and palindrome $u$, or $|w_i|=|w_j|=2$ for some $i,j$, so $w_i$ and $w_j$ can be replaced by four 1-letter factors. This leads to the following crucial observation.

\begin{lemma}[{\cite[Sect. 4.1]{RuSh18}}] \label{l:kvslen} 
(1) A string $s$ has a $k$-pal-factorization iff $\pl^{k \bmod 2}(s)\le k$.\\ (2) A $k$-pal-factorization of $s$ can be obtained in $O(|s|)$ time from its $\pl^{k \bmod 2}(s)$-pal-factorization.
\end{lemma}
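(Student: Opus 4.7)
The plan is to lean on the $+2$ transformation spelled out just above the lemma, which turns any palindromic $k'$-factorization with $k' \le |s|-2$ into a $(k'{+}2)$-factorization: either split some factor $aua$ of length ${\ge}3$ into $a,u,a$, or split two length-$2$ factors into four singletons. First I would confirm that these two cases are exhaustive whenever $k' \le |s|-2$, via the simple count $\sum_i(|w_i|-1) = |s|-k' \ge 2$; if no factor has length ${\ge}3$ then at least two factors must have length exactly $2$.

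For part (1), the ``only if'' direction is immediate from the definition of $\pl^{k \bmod 2}$. For the ``if'' direction, set $m = \pl^{k \bmod 2}(s) \le k$, and note that $m$ and $k$ share parity. I would induct on $(k-m)/2$: start from any palindromic $m$-factorization and apply the $+2$ step $(k-m)/2$ times. A $k$-factorization can exist only when $k \le |s|$, and under this necessary condition every intermediate count $k'$ satisfies $k' \le k-2 \le |s|-2$, so the hypothesis of the transformation holds throughout.

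For part (2), I would represent the current factorization as a doubly linked list together with two auxiliary queues: one storing pointers to factors of length ${\ge}3$, the other storing pointers to factors of length exactly $2$. At each $+2$ step, I would either dequeue a long factor, peel off its first and last characters, and re-enqueue the middle in the appropriate queue (routed according to its new length), or dequeue two length-$2$ factors and replace each by two singletons. Each step runs in $O(1)$, and after $(k-m)/2 = O(n)$ steps we obtain the desired $k$-factorization in total time $O(n)$.

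I expect no genuine combinatorial obstacle, since the $+2$ mechanism is essentially handed to us by the paragraph preceding the lemma; the only point requiring some care is the queue-based bookkeeping for part (2) that keeps the amortized cost per step at $O(1)$ rather than $O(n)$.
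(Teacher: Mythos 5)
Your proposal is correct and follows essentially the same route as the paper, which proves the lemma via exactly the $+2$ transformation sketched in the paragraph preceding it (split a factor $aua$ of length ${\ge}3$, or two length-$2$ factors, citing \cite[Sect.~4.1]{RuSh18}); your exhaustiveness count $\sum_i(|w_i|-1)=|s|-k'\ge 2$ and the queue-based $O(1)$-per-step bookkeeping just make the paper's ``easily transformed'' claim explicit. The only caveat, which you in fact handle more carefully than the paper, is the implicit standing assumption $k\le|s|$ needed for the ``if'' direction of part (1).
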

\begin{remark} \label{r:oddeven}
Due to Lemma~\ref{l:kvslen}, throughout the paper we study the problem of existence of a $k$-pal-factorization  for a string $s$ as the problem of computing $\pl^0(s)$ and $\pl^1(s)$.
\end{remark}
Recall that the  method for computing $\pl(s)$ is dynamic programming: we compute the array $\pl[1..n]$ such that $\pl[i]=\pl(s[1..i])$ using an artificial initial value $\pl[0]=0$ and the rule $\pl[k]=1+\min_{i\in S_k}\pl[i{-}1]$, where $S_k$ is the set of positions of all suffix-palindromes of $s[1..k]$. This rule can be easily adapted to compute $\pl^0(s)$ and $\pl^1(s)$ (due to symmetry, we always write $\pl^j$ assuming $j\in\{0,1\}$). We define the arrays $\pl^j[1..n]$ and employ the scheme
\begin{equation}\label{e:dp}
\pl^0[0]=0,\ \pl^1[0]=\infty,\     \pl^j[k]=1+\min_{i\in S_k}\pl^{1-j}[i{-}1], \text{ where } S_k=\{i: s[i..k] = \overleftarrow{s[i..k]}\}
\end{equation}

Henceforth, $s$ is the input string of length $n$ and  $\PL[i]=(\pl^0[i],\pl^1[i])$.  All considered algorithms work in the unit-cost word-RAM model with $\Theta(\log n)$-bit machine words and standard operations like in the C language. Since the only operation used on alphabetic symbols is testing two symbols for equality, the algorithms are valid for the general unordered alphabet. All algorithms except Algorithm~\ref{a:fact} are online and work in \emph{iterations}: $i$th iteration begins with reading the symbol $s[i]$ and ends before reading $s[i{+}1]$.

\section{An $O(n\log n)$ Algorithm}
\label{s:nlogn}

\subparagraph*{Palindromic Iterator.}
We process the input string using a data structure called (palindromic) \emph{iterator} \cite{KosolobovPalk}, which stores a string $s$ and answers the following queries in $O(1)$ time:\\
- $\rad(x)$ / $\len(x)$ returns the radius / length of the longest palindrome in $s$ with the center $x$;\\
- $\maxPal$ returns the center of the longest suffix-palindrome of $s$;\\
- $\nextPal(x)$ returns the center of the longest proper suffix-palindrome of the suffix-palindrome of $s$ with the center $x$.\\
The iterator stores an array of radii for all possible centers of palindromes and a list of centers of suffix-palindromes in increasing order. The update query $\append(a)$ appends letter $a$ to $s$. Performing this query, the iterator emulates an iteration of Manacher's algorithm \cite{Manacher} and updates the list of suffix-palindromes, all within  $O(1+\maxPal_{new}-\maxPal_{old})$ time, which is $O(n)$ for $n$ updates to the originally empty structure. Below we assume that $\append(a)$ returns the list of deleted centers in the form $\dead(x)=(x, \text{answers to all queries about } x)$. We also write $\cntr(d)=n - (d-1)/2$ for the center of the length-$d$ suffix-palindrome of $s[1..n]$. 

With the iterator, the rule \eqref{e:dp} can be implemented to work in time proportional to the number of subpalindromes in $s$ ($\Omega(n^2)$ in the worst case); one iteration  is as follows:
\begin{algorithmic}[1]
\small
\State $\append(s[n]);\;\pl^j[n] \gets +\infty$
\For{($x \gets \maxPal;\; x \ne n + \frac{1}2;\; x\gets \nextPal(x)$)}
    \State $\pl^j[n] \gets \min\{\pl^j[n], 1 + \pl^{1-j}[n - \len(x)]\}$
\EndFor\vspace{-0.1cm}
\end{algorithmic}

\subparagraph*{Series of Palindromes.}
Let $u_1,\ldots, u_\ell$ be all non-empty suffix-palindromes of a string $s$ in the order of decreasing length. For any $i<j$, since $u_j$ is a suffix of $u_i$, any period of $u_i$ is a period of $u_j$. Hence the sequence of minimal periods of $u_1,\ldots, u_l$ is non-increasing. The groups of suffix-palindromes with the same minimal period are \emph{series of palindromes} (of $s$):
$$
\underbrace{u_1,\ldots,u_{i_1}}_{p_1},\underbrace{u_{i_1+1},\ldots,u_{i_2}}_{p_2},\ldots,\underbrace{u_{i_{t-1}+1},\ldots,u_\ell}_{p_t}.
$$
A \emph{$p$-series} is a series with the period $p$. The longest and the shortest palindrome in a $p$-series are its \emph{head} $\head(p)$ and \emph{tail} $\tail(p)$ respectively (they coincide in a 1-element series).  
\begin{lemma}[\cite{FiciCo,ISugimotoInenagaBannaiTakeda,KosolobovPalk}; $\mathbf{(*)}$] \label{l:lognseries}
For any string $s$, if $\ell$ is the length of a tail of a series, then the head of the next series has length less than $2\ell/3$. In particular, if $p_1>\ldots>p_t$ are periods of all series of $s$, then $t=O(\log p_1)$ and $p_1+\cdots+p_t=O(p_1)$.
\end{lemma}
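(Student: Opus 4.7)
The plan is to prove $|h| < 2r/3$ and then derive the counting bounds via geometric decay. Let $T$ be the tail of some series with $|T| = r$ and minimal period $p$, and let $h$ be the head of the next series with $|h| = \ell$ and minimal period $p' < p$. By Lemma~\ref{PalPeriod}, write $T = (uv)^m u$ with $|uv| = p$ and palindromes $u, v$. I will show three facts --- (i) $m \le 2$, (ii) $\ell \le r - p$, (iii) $r < 3p$ --- and then conclude $\ell \le r - p < 2r/3$.

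For (i): if $m \ge 3$ then $w := (uv)^2 u$ is a proper palindromic suffix of $T$ of length $2p+|u|$, and $T$ contains overlapping copies of $w$ at positions $0, p, \ldots, (m-2)p$ with consecutive pairs overlapping in $p + |u| \ge p$ characters. Any candidate period $p'' < p$ of $w$ would propagate across these overlaps to become a period of $T$, contradicting minimality of $p$. So $w$ has minimal period $p$, placing it in $T$'s series but strictly shorter, contradicting tailness. For (ii), the palindromic symmetry of both $T$ and $h$ yields, for $i \in [0, \ell - 1]$,
\[
  T[r-\ell+i] = h[i] = h[\ell-1-i] = T[r-1-i] = T[i]
\]
(using in turn that $h$ is a suffix of $T$, $h$ is a palindrome, $h$ is a suffix of $T$, and $T$ is a palindrome), so $r - \ell$ is a period of $T$ and $p \le r - \ell$. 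For (iii): (i) and $|u| \le p$ give $r = mp + |u| \le 3p$; the only way to hit $r = 3p$ is the degenerate case $m = 2$ with $|u| = p$, i.e., $T = u^3$, and then the same overlap-propagation argument applied to the two copies of $u^2$ inside $u^3$ (overlap $p$) shows that $u^2$ has minimal period $p$, hence is a proper suffix-palindrome of $T$ in the same series, again a contradiction. Thus $r < 3p$ and $\ell \le r - p < 2r/3$.

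For the counting statement, $r_{i+1} \le |h_i| < (2/3)\, r_i$ gives $r_i < (2/3)^{i-1} n$, so from $r_t \ge 1$ one obtains $t = O(\log n)$, and $\sum p_i \le \sum r_i < 3n = O(n)$. The main obstacle is the overlap-propagation argument used in (i) and (iii): one must check carefully that the overlap between consecutive copies of the candidate substring strictly exceeds any putative smaller period, and apply it a second time to rule out the $|v|=0$ degeneracy that would otherwise permit $r = 3p$.
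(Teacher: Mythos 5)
Your proof is correct, and there is nothing in the paper to compare it against: Lemma~\ref{l:lognseries} is imported from the cited works without proof, so your argument stands as a self-contained derivation. Its core is the bound $r<3p$ for a tail of minimal period $p$, which you get by overlap propagation (two occurrences of a string with period $p''$ overlapping in at least $p''$ positions force $p''$ onto their union) — essentially a hand-rolled substitute for the Fine--Wilf-style reasoning in the cited sources; your check that the overlaps ($p+|u|$, resp.\ $p$ in the cubic case) exceed any candidate $p''<p$ is exactly the point that needs care, and you handle it. Step (ii) is just Lemma~\ref{PalPeriod}(1)$\Leftrightarrow$(3) rederived by hand; in fact the head of the next series is precisely the longest proper suffix-palindrome of the tail, so $\ell=r-p$ exactly, and the inequality could have been quoted. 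An alternative, shorter route inside the paper's own toolkit: the classification in Lemma~\ref{l:series} (which the paper does prove) says the tail is $(uv)^2u$ or $uvu$ with $v\ne\varepsilon$ and the next head is $uvu$ or $u$ respectively, and then $(p+|u|)<\tfrac23(2p+|u|)$ and $|u|<\tfrac23(p+|u|)$ both follow from $|u|<p$; your route instead proves the needed structural facts from scratch, which is what the original references do. Two cosmetic slips in the counting step: the inequality should read $r_{i+1}\le|h_{i+1}|<\tfrac23 r_i$ (head of series $i+1$, not $i$), and $r_1\le n+1$ rather than $n$; neither affects the $O(\log n)$ and $O(n)$ conclusions.
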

Note that length-$n$ strings with $\Omega(\log n)$ series for $\Omega(n)$ prefixes do exist \cite{FiciCo}. The structure of series is described in the following lemma.

\begin{lemma}[{\cite{BKRS17}}; $\mathbf{(*)}$] \label{l:series}
For a string $s$ and $p\ge 1$, let $U$ be a $p$-series of palindromes, $r=\#U$.
There exist unique palindromes $u,v$ with $|uv| = p$, $v \ne \varepsilon$ such that one of the conditions holds:\\
1) $U=\{(uv)^{r+1}u, (uv)^ru, \ldots, (uv)^2u\}$ and $uvu$ is the head of the next series,\\
2) $U=\{(uv)^ru, (uv)^{r-1}u,\ldots, uvu\}$ and $u$ is the head of the next series,\\
3) $U =\{v^r, v^{r-1},\ldots,v\}$, $p=1$, $|v|=1$, $u=\varepsilon$.%
\end{lemma}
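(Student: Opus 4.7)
My plan is to extract the unique pair $u,v$ from the head of $U$ and then do a three-way case analysis on the tail's exponent in the decomposition.

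I first let $w_1$ be the head of $U$, which has minimal period $p$ by assumption. Lemma~\ref{PalPeriod}(2) yields palindromes $u,v$ with $|uv|=p$ and $w_1=(uv)^{m_1}u$; choosing $m_1$ maximal forces $|u|<p$, so $v\neq\varepsilon$, and pins $u,v$ down uniquely. Lemma~\ref{PalPeriod}(3) then certifies that for each $1\le m\le m_1$, the string $(uv)^m u$ is a palindromic suffix of $s$ with period $p$. The crucial combinatorial step is to prove that whenever $m\ge 2$, $p$ is actually the \emph{minimal} period of $(uv)^m u$, placing $(uv)^m u$ in $U$. Suppose otherwise, that some $q<p$ is also a period; since $|(uv)^m u|\ge 2p$, Fine--Wilf forces $p':=\gcd(p,q)$ to be a period with $p'\mid p$ and $p'\le p/2$. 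Then $uv$ decomposes as $t^{p/p'}$ for some $t$ of length $p'$, and a direct index check---using that $u$ is a prefix of $uv$ and that $m_1 p$ is a multiple of $p'$---propagates the period $p'$ from $(uv)^m u$ all the way to $w_1$, contradicting $p$ being the minimal period of $w_1$.

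Consequently the tail exponent $m_k$ is at most $2$, and I split into three cases. If $m_k=2$, the palindrome $uvu=(uv)^1 u$ is a palindromic suffix of $s$ outside $U$, so its minimal period is strictly smaller than $p$; it must therefore head the next series, which is Case~1 with $m_1=k+1$. If $m_k=1$ and $u\neq\varepsilon$, then chopping off a prefix of length $p$ from $w_k=uvu$ yields the nonempty palindrome $u$, whose length $|u|<p$ forces its minimal period below $p$: Case~2. If $m_k=1$ and $u=\varepsilon$, then $w_k=v$, but any palindrome of length at least $2$ has matching first and last letters and therefore minimal period at most $|v|-1$, which would contradict $w_k\in U$; so $|v|=p=1$, no further palindromic suffix exists, and we land in Case~3.

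A subtle point in Cases~1 and~2 is to confirm that the proposed ``head of the next series'' really is the longest palindromic suffix of $s$ strictly shorter than $w_k$. I will use the standard fact that if $y$ is a palindrome with palindromic suffix $z$, then $|y|-|z|$ is a period of $y$: applied to $y=w_k$, any palindromic suffix $x$ with $|u|<|x|<|w_k|$ in Case~2 (respectively $|uvu|<|x|<|w_k|$ in Case~1) would yield a period of $w_k$ strictly between $0$ and $p$, contradicting $w_k$'s minimal period. Uniqueness of $u,v$ is then inherited from the canonical choice made for $w_1$. I expect the Fine--Wilf step to be the main obstacle, because propagating a hypothetical small period from the shorter suffix back to $w_1$ across the boundary between the copies of $uv$ and the trailing $u$ requires a careful alignment check on which the rest of the argument hinges.
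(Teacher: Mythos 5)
Your argument is correct, and it diverges from the paper's proof at the crucial step. The paper establishes that the minimal period of $(uv)^2u$ equals $p$ by invoking Lemma~\ref{SeriesSubpal}: every subpalindrome of length ${\ge}|uv|-1$ of a $p$-periodic word is centered at a $u$- or $v$-center, so the only long proper suffix-palindrome of $(uv)^2u$ is $uvu$, and Lemma~\ref{PalPeriod} converts this into minimality of the period; the structure of $U$ and the head of the next series then come from iterating Lemma~\ref{PalPeriod}, with the case split on whether $(uv)^2u\in U$. You instead rule out a smaller period $q<p$ of $(uv)^m u$ ($m\ge 2$) by the Fine--Wilf theorem (applicable since $|(uv)^m u|\ge 2p > p+q-\gcd(p,q)$), obtaining a period $p'=\gcd(p,q)\mid p$, $p'\le p/2$, and then propagating $p'$ back to the head $w_1=(uv)^{m_1}u$ (as $uv=t^{p/p'}$ and $u$ is a prefix of $t^\infty$, $w_1$ is a prefix of $t^\infty$ and has period $p'$), contradicting the definition of the $p$-series --- this is the same style of argument the paper uses later in Lemma~\ref{l:ttl}, and it makes your proof self-contained modulo Lemma~\ref{PalPeriod} and Fine--Wilf, at the price of the alignment check, whereas the paper gets the same fact in one line from the prepackaged Lemma~\ref{SeriesSubpal}. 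Your treatment of uniqueness of $u,v$ (maximal exponent forces $|u|<p$, hence $v\ne\varepsilon$) and of the ``next series begins with $uvu$, resp.\ $u$'' claims via the fact that the length difference of nested suffix-palindromes is a period is more explicit than the paper's. One point you leave implicit: to speak of ``the tail exponent $m_k$'' you also need that $U$ contains nothing other than the strings $(uv)^m u$, i.e.\ no suffix-palindrome of $s$ with length strictly between $|(uv)^{m-1}u|$ and $|(uv)^m u|$ lies in $U$; this follows by exactly the gap argument you already state for the tail (such a palindrome would give $w=(uv)^m u$, $m\ge2$, a period strictly less than $p$, and below $uvu$ monotonicity of minimal periods applies), so it is a one-sentence addition rather than a flaw --- the paper compresses the same point into ``applying Lemma~\ref{PalPeriod} iteratively.''
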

With the notion of series, rule \eqref{e:dp} can be rewritten as
\begin{equation} \label{e:nlogn}
\pl^j[n] = 1 + \min_U\min_{u \in U} \pl^{1-j}[n{-}|u|], \text{ where } U \text{ runs through all series of } s.    
\end{equation}
 Our  Algorithm~\ref{a:nlogn} below makes use of Lemma~\ref{l:series} to compute each internal minimum in \eqref{e:nlogn} in $O(1)$ time at the expense of some additional storage. By Lemma~\ref{l:lognseries}, this means computing of both arrays $\pl^j[1..n]$ in $O(n\log n)$ time. Algorithm~\ref{a:nlogn} is an adaptation of the algorithm of \cite{BKRS17} for palindromic length; but since it serves as a base for the linear-time solution, we provide all necessary details. To maintain series, we define an auxiliary array $\leftside[1..n]$: for $p \in [1..n]$, if $s[1..n]$ has a $p$-series, then  $\leftside[p]$ is such that $s[\leftside[p]{+}1..n]$ is the longest suffix (which is not necessarily a palindrome) of $s[1..n]$ with period $p$; otherwise, $\leftside[p]$ is undefined. 
\begin{example} \label{ex:left}
If $s[1..n]=\cdots aaabaaba$ and $p=3$, the longest 3-periodic suffix is $s[n{-}6..n]=aabaaba$ and $\leftside[3]=n-7$. If we extend $s$ by $b$, this will break period 3 and make $\leftside[3]$ undefined. If we then append $ba$, the resulting string $s\cdot bba=\cdots aaabaab\pmb{abba}$ of length $n+3$ will have a suffix-palindrome of period 3 \pmb{again}, and $\leftside[3]$ will get the new value $n-2$. 
\end{example}
\begin{remark} \label{r:obsolete}
We  do \emph{not} explicitly make $\leftside[p]$ undefined if it was defined earlier. We compute it at the iterations where a $p$-series is present. If the new value differs from the old one, we conclude that period $p$ broke since we saw the previous $p$-series. 
\end{remark}

\begin{lemma}[$\mathbf{*}$]\label{l:left}
Suppose that the iterator contains a string $s[1..n]$ having a $p$-series. Given $p$ and $|\head(p)|$, $\leftside[p]$ can be computed in $O(1)$ time. 
\end{lemma}

Internal minima in \eqref{e:nlogn} are computed in $O(1)$ time as follows. Let $U=\{(uv)^ru,\ldots,uvu\}$, where $r>1$, be a $p$-series for $s[1..n]$ (other cases from Lemma~\ref{l:series} are similar). In \eqref{e:nlogn} we compute $m=\min\{\pl^{1-j}[n{-}rp{-}|u|],\ldots,\pl^{1-j}[n{-}p{-}|u|]\}$ to update $\pl^j[n]$. Note that $s[1..n]$ ends with $(uv)^ru$ but not with $(uv)^{r+1}u$: otherwise, the latter string would belong to $U$. Then $s[1..n{-}p]$ ends with $(uv)^{r-1}u$ but not with $(uv)^ru$ and thus has the $p$-series $U'=\{(uv)^{r-1}u,\ldots,uvu\}$. Thus, at $(n{-}p)$th iteration we computed $m'=\min\{\pl^{1-j}[n{-}rp{-}|u|],\ldots,\pl^{1-j}[n{-}2p{-}|u|]\}$ to update $\pl^j[n{-}p]$ and saved $m'$ into an auxiliary array. Then $m=\min\{m',\pl^{1-j}[n-p-|u|]\}$ is computed in constant time, as required. We store all precomputed minima in two arrays $\pre^j[1..n]$, where $j\in\{0,1\}$ and each $\pre^j[p]$ is, in turn, an array $\pre^j[p][0..p{-}1]$ such that 
\begin{multline} \label{e:pre}
\pre^j[p][i] = {\min}_t\, \pl^j[t],\text{ where }  t\in[\leftside[p]..n{-}p{-}1]\ \mathbf{and}\ (t-\leftside[p])\bmod p=i \\ \mathbf{and}\ s[t{+}1..n] \text{ begins with a palindrome of minimal period } p
\end{multline}
(see the example in Fig.~\ref{f:pre}). If the minimum in \eqref{e:pre} is taken over the empty set, $\pre^j[p][i]$ is undefined. Let $\{u_1=\head(p), \ldots, u_r\}$ be a $p$-series for $s[1..n]$, $i= n-|u_1|-\leftside[p]$. Then $0\le i<p$ and $(n-|u_\ell|-\leftside[p])\bmod p=i$ for all $\ell$. By \eqref{e:pre},  $\pre^{1-j}[p][i] = \min_{\ell\in[1..r]} \pl^{1-j}[n{-}|u_\ell|]$, which is exactly the value $m$ mentioned above.  We denote $\PRE[p][i]=(\pre^0[p][i],\pre^1[p][i])$. 
\begin{figure}[htb]
    \centering
    \includegraphics[scale=0.87, trim= 35 749 125 31, clip]{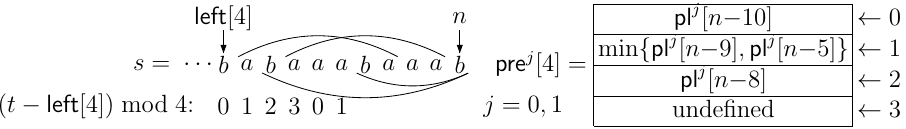}
    \caption{Precomputed values for the 4-series at the $n$th iteration; see \eqref{e:pre}. Residues modulo 4 are shown for all positions $t\in[\leftside[4]..n{-}4{-}1$]. Each position followed by a palindrome of minimum period 4 (an arc) contributes to the computation of an element of $\pre^j[4]$. The palindromes following the only position marked by 3 have minimum periods 1 ($aa$) and 3 ($aabaa$), but not 4.}
    \label{f:pre}
\vspace*{-3mm}
\end{figure}

If $s[1..n]$ has a suffix-palindrome $w$ centered at $x$, we say that $w$ \emph{survives} the next iteration if $x$ is the center of a suffix-palindrome of $s[1..n{+}1]$.  Otherwise, $w$ (or $x$) \emph{dies} at that iteration. We refer to the number of future iterations $x$ survives as its \emph{time-to-live}, denoted by $\pttl_n(x)$. The same notions apply to any series of $s[1..n]$ and to its period $p$; we write $\ttl_n(p)$ for the time-to-live of $p$. If $p$ dies, then $p$-series also dies, but not vice versa. More precisely, while $p$ is alive, $p$-series evolves\label{p:evolution} as follows (see Fig.~\ref{f:evolve}; $\mathbf{(*)}$). 
A $p$-series appears at $n$th iteration as a single palindrome $\head(p)=uvu$ centered at $x$. At the iteration $n{+}i$ such that $n+i-x=x-\leftside[p]$, $x$ dies together with the series. At some iteration $n{+}i{+}\ell$, $0\le \ell<p{-}i$, the series ``reborns'' as a palindrome centered at $x+\frac p2$  (if $\ell=0$, we say that the series has not died). At the $(n{+}p)$th iteration, a palindrome centered at $x{+}p$ joins the series; every subsequent $p$ iterations follow the same pattern, but the death of the head no longer means the death of the series. Knowing the structure of series allows one to store $\PRE[p]$ in dynamic arrays, avoiding the allocation of $\Omega(n^2)$ space. The proof of the next lemma has been moved to Appendix, since a stronger result (Lemma~\ref{l:compresspre}) is proved in Section~\ref{s:n}.
\begin{figure}[htb]
    \centering
    \includegraphics[scale=0.87, trim= 35 745 310 32, clip]{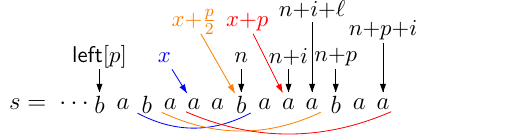}
    \caption{Evolution of a series ($p=4, i=2, j=1$). 4-series appears at the $n$th iteration as a single palindrome $\head(p)=baaab$ centered at $x$ (blue arc). At $(n{+}i)$th iteration it dies because $s[n{+}i]\ne s[\leftside[p]]$. At $(n{+}i{+}\ell)$th iteration it reappears as $\head(p)=aaabaaa$ centered at $x+\frac p2$ (orange arc). At $(n{+}p{+}i)$th iteration this center dies, but the series survives, because it was earlier joined by a palindrome centered at $x{+}p$. Currently $\head(p)=aabaaabaa$ (red arc).}
    \label{f:evolve}
\vspace*{-3mm}
\end{figure}

\begin{lemma}\label{l:storepre}
All arrays $\PRE[p]$ can be stored in a data structure requiring $O(n\log n)$ space, $O(1)$ time per deletion of an array, and amortized  $O(1)$ time per operation with any element.
\end{lemma}

Now we are ready for Algorithm~\ref{a:nlogn}. Given a new letter $s[n]$, we compute $\PL[n]$ as follows:
\begin{algorithm*}
\caption{: $O(n\log n)$ algorithm, $n$th iteration}
\label{a:nlogn}
\begin{algorithmic}[1]
\small
\State $\append(s[n]);\;\PL[n] \gets (\infty,\infty);$
\For{($x \gets \maxPal;\; x \ne n + \frac{1}2;\; x \gets \nextPal(\cntr(d))$)} \Comment{goes to next head each time}
	\State $p \gets \len(x) - \len(\nextPal(x));$\Comment{min. period of the head centered at $x$}
    \State $d \gets p + (\len(x) \bmod p);$\Comment{length of candidate $\tail(p)$}
    \If{$\len(\cntr(d)) - \len(\nextPal(\cntr(d))) \ne p$}
    	  $d \gets d + p;$\Comment{corrected length of $\tail(p)$}
    \EndIf
    \State $y\gets\leftside[p]$; compute $\leftside[p]$; $i\gets n{-}\len(x){-}\leftside[p]$;\Comment{$O(1)$ time by Lemma~\ref{l:left}}
    \If{$\leftside[p]>y$}
    	 clear $\PRE[p]$;\Comment{delete obsolete values}
    \EndIf
    \If{$\len(x) = d$}\label{lst:onel:seriesCond}
    	$\PRE[p][i] \gets \PL[n{-}d];$
    \EndIf
    \State $\PRE[p][i] \gets (\min\{\pre^0[p][i], \pl^0[n{-}d]\}, \min\{\pre^1[p][i], \pl^1[n{-}d]\})$;\label{lst:calcPrecalc}
    \State $\PL[n] \gets (\min\{\pl^0[n], 1 + \pre^1[p][i]\}, \min\{\pl^1[n], 1 + \pre^0[p][i]\});$
\EndFor
\end{algorithmic}
\end{algorithm*}

\begin{proposition} \label{p:nlogn}
Algorithm~\ref{a:nlogn} correctly computes $\PL[1..n]$ in $O(n\log n)$ time.
\end{proposition}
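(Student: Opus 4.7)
The correctness argument I will give proceeds by induction on $n$ and splits into two invariants maintained at every iteration: (i)~the outer loop enumerates the heads of all series of $s[0..n]$ exactly once, and (ii)~at the end of iteration $n$, every $\pre^j[p][i]$ satisfies~\eqref{e:pre}. For (i), the initial value $x=\maxPal$ is the head of the longest series by definition, and the reassignment $x\gets\nextPal(\cntr(d))$ jumps from the current tail (at center $\cntr(d)$) to the next-longest suffix-palindrome, which by definition of a series is the head of the next one; the loop stops at the ``sentinel center'' $n+\frac{1}{2}$ of the empty suffix. Lemma~\ref{l:series} then determines $p$ and $d$ from the head alone: $p=\len(x)-\len(\nextPal(x))$ because consecutive suffix-palindromes of the same series differ in length by exactly $p$, and $d=p+(\len(x)\bmod p)$ is a guess at $|uvu|$ corresponding to case~(2) of the lemma; the extra test recognises case~(1), where the true tail is one palindrome longer and $d$ must be bumped by $p$.

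For (ii), Lemma~\ref{l:left} lets us recompute $\leftside[p]$ in $O(1)$ time, and comparison with the stored value $y$ detects whether the present $p$-series extends an earlier one or is freshly born after period $p$ was broken and restored. In the latter case the algorithm invalidates $\PRE[p]$, so that the subsequent update preserves~\eqref{e:pre}: when the current $p$-series has more than one element, the same $\pre^j[p][i]$ was written at the earlier iteration in which the $p$-series was one element shorter (the offset $i=n-\len(x)-\leftside[p]$ is invariant under shifting $n$ by $p$ and $\len(x)$ by $p$, as in the informal derivation preceding~\eqref{e:pre}), and taking the min with $\pl^j[n{-}d]$ extends it to the full current series; when the series has exactly one element (the case $\len(x)=d$), the explicit initialisation resets the entry to $\PL[n{-}d]$ before the min is taken. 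Plugging these values into~\eqref{e:nlogn} then yields the claimed $\PL[n]$.

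For the time bound, Lemma~\ref{l:lognseries} caps the outer loop at $O(\log n)$ iterations per call, and every step inside the body, apart from the clearing of $\PRE[p]$, is $O(1)$ by construction: iterator queries are $O(1)$, and $\leftside[p]$ is $O(1)$ by Lemma~\ref{l:left}. The subtle step is the clear, which naively costs $\Theta(p)$. My plan is to implement each $\PRE[p]$ with the standard lazy-reset trick: keep a per-$p$ version counter and tag each written entry with the current counter value, so that clears are $O(1)$ and reads return ``undefined'' whenever the tag is outdated. Granting this, the algorithm runs in $O(n\log n)$ time, with an additional amortised $O(n)$ contributed by the calls to $\append$. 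The only genuine obstacle is this lazy-reset implementation detail; the remainder is a careful but routine check of invariants~(i) and~(ii) against Lemmas~\ref{l:series} and~\ref{l:left}.
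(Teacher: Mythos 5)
Your argument is essentially the paper's own proof: lines 3--5 identify the period and tail of each series via Lemma~\ref{l:series} so that the loop visits exactly the $O(\log n)$ heads (Lemma~\ref{l:lognseries}), lines 6--7 detect period breaks through $\leftside[p]$ (Lemma~\ref{l:left}), and the invariant~\eqref{e:pre} is maintained by the shift-by-$p$ argument, each loop body costing $O(1)$ plus $O(n)$ total for the $\append$ queries. Your extra remark about implementing ``clear $\PRE[p]$'' in $O(1)$ time via a version-counter/lazy-reset tag is a sound way to handle a detail the paper's proof leaves implicit, and does not change the substance of the argument.
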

\begin{proof}
All  $\append$ queries require $O(n)$ time in total, so we ignore them. Let $x$ be the center of a processed suffix-palindrome $w=(uv)^ru$ with period $p$. Then $p =|uv|=\len(x) - \len(\nextPal(x))$ (see, e.g., \cite[Lemmas 2,3]{KosolobovPalk}). Let $x' = \cntr(p + (\len(x) \bmod p))$ be the center of the suffix-palindrome $uvu$, $p'=\len(x') - \len(\nextPal(x'))$. By Lemma~\ref{l:series}, $uvu=\tail(p)$ if $p'=p$ and $uvu=\head(p')$ otherwise. Thus in lines 3--5 Algorithm~\ref{a:nlogn} computes, using $O(1)$ queries to the iterator, the period and the length of $\tail(p)$. This means, in particular, that the \textbf{for} loop iterates over all heads of series in $s$, which means $O(\log n)$ runs by Lemma~\ref{l:lognseries}.

If a symbol added to $s$ breaks period $p$, all values in $\pre^j[p]$ become obsolete and should be deleted. Algorithm~\ref{a:nlogn} handles this in lines 6--7, using Remark~\ref{r:obsolete}. 

Let $\{u_1, \ldots, u_r\}$ be a $p$-series, $i=n{-}|u_1|{-}\leftside[p]$. If $r=1$, there was no $p$-series $p$ iterations ago, so the undefined value  $\pre^j[p][i]$ is set to $\pl^j[n{-}|u_r|]$ in line~\ref{lst:onel:seriesCond}. Otherwise one has $\pre^j[p][i] = \min\{\pl^j[n{-}|u_1|], \ldots, \pl^j[n{-}|u_{r-1}|]\}$ by \eqref{e:pre}, and this value is updated using $\pl^j[n{-}|u_r|]$ in line~\ref{lst:calcPrecalc}; so $\pre^j$ is correctly maintained. Finally, in line 10 the rule \eqref{e:nlogn} is implemented. So the algorithm is correct and each run of the \textbf{for} loop takes $O(1)$ amortized time due to Lemma~\ref{l:storepre}. The result now follows.
\end{proof}

\section{Linear-Time Algorithm} \label{s:n}

\subparagraph*{Resources for speed-up.} 
In some cases, dynamic programming can be sped up by a $\log n$ factor by a technique called \emph{four Russians' trick} \cite{ADKF70}. The idea is to store the DP array(s) in a compressed form requiring $O(1)$ bits per element and update $(\log n)$-size chunks of the compressed array using $O(1)$ operations on machine words. The key operations used are \emph{table operations}: $f(..)$ is a table operation if it has $o(n)$ (typically $O(n^\alpha)$, where $\alpha<1$) valid inputs and the results for all valid inputs can be computed in $o(n)$ time. 
\begin{remark}
We follow a usual scheme: all results for a constant number of table operations are computed in the $o(n)$-time preprocessing phase and stored in auxiliary tables.
\end{remark}
This technique was used, in particular, for palindromic factorization in \cite{KosolobovPalk, BKRS17} and for square factorization in \cite{MIBTM16}. To apply it to Algorithm~\ref{a:nlogn}, we should meet the following conditions: 
(i) the array $\PL$ can be compressed to $O(1)$ bits per element; (ii) each array $\PRE[p]$ can be compressed to $O(p+\log n)$ bits; (iii) updates of arrays $\PL,\PRE[p]$ (lines 8--10) can be performed without decompression, simultaneously for $\Omega(\log n)$ successive iterations with a constant number of operations over machine words; (iv) all intermediate states of arrays $\PL$ and $\PRE[p]$ during the course of the algorithm are valid inputs for the compression scheme.

For palindromic length \cite{BKRS17} this works as follows. 
If $w$ is a string and $a$ is a letter, then $|\pl(wa)-\pl(w)|\le 1$ \cite[Lemma 4.11]{RuSh18}. Thus it is possible to store any subarray $\pl[i..j]$ as one number $\pl[i]$ followed by $(j{-}i)$ 2-bit codes for the differences $\pl[r]-\pl[r{-}1]$. The situation with the arrays $\pre[p]$ is subtle, but each of these arrays can be efficiently split into a constant number of chunks, where successive elements of the same chunk differ by at most one. For each chunk, the same encoding as for $\pl$ works. 
Range updates are based on the observation that if $s[1..n{-}1]$ has a suffix-palindrome with the center $x$, which survives next $t$ iterations, then one can assign $\pl[n..n{+}t{-}1] \gets \min\{\pl[n..n{+}t{-}1], 1 + \overleftarrow{\pl[2x{-}n{-}t{-}1..2x{-}n]}\}$.

Let $t=\lfloor \frac{\log n}{8}\rfloor$. Formally, a \emph{chunk} is an array $A=A[1..h]$, where $h<t$, of  $(\log n)$-bit numbers such that  $|A[i]-A[i{-}1]|\le 1$ for all $i$; it is  stored as a $(\log n)$-bit number followed by $h$ $2$-bit codes encoding these differences. If the length of a chunk is less than $t$, the unused 2-bit code is added to the end.
In chunks $\lvec{A}$ and $\min\{A,B\}$ the consecutive elements differ by at most 1, so they can be compressed. The next lemma allows fast performance.

\begin{lemma}[\cite{BKRS17};$\mathbf{(*)}$] \label{l:tableop}
The following operations can be performed in $O(1)$ time using table operations: (1) incrementing all elements of a chunk, (2) extracting an element from a chunk, (3) extracting a chunk at any position, (4) reversing a chunk, (5) concatenating two chunks, (6) extending a chunk with dummy values, (7) taking the minimum of two chunks. 
\end{lemma}

The algorithm of \cite{BKRS17} groups iterations into \emph{phases}. Each phase begins immediately after the end of the previous phase and continues until one of three conditions is met: $t$ iterations passed, the input string ended, or the longest suffix-palindrome will die at the next iteration.
In the beginning of a phase, a ``prediction'' is made that $\maxPal$ survives the next $t$ iterations. Under this assumption, time-to-live's of periods are computed in $O(1)$ time and range updates to $\pre$ and $\pl$ are performed for the corresponding number of iterations, using $O(1)$ operations from Lemma~\ref{l:tableop}. After processing all series, actual $t$ letters are added one by one; each time the iterator is updated, one or two new centers for palindromes appear. These palindromes are used to update $\pl$ and $\pre$; their time-to-live's are computable in $O(1)$ time. 
When $s[i]$ is processed, $\pl[i]$ gets its true value. If an input symbol changes $\maxPal$, the phase is aborted, unfinished updates are deleted, and a new phase is started from the current symbol. 

\subparagraph*{Compression and smoothing.} 

The following lemma allows one to compress the array $\PL$.

\begin{lemma} \label{l:compress}
If $w$ is a string, $a,b$ are letters, $j\in\{0,1\}$, then $\pl^j(wab)\in\{\pl^{1-j}(wa)+1$, $\pl^{1-j}(wa)-1, \pl^{1-j}(w)+1\}$.
\end{lemma}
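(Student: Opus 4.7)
The plan is to take an optimal palindromic $\pl^j(wab)$-factorization $wab = w_1 \cdots w_k$ and analyse the last factor $w_k$. Since $w_k$ is a palindrome ending with the letter $b$, its first character is also $b$; and if $|w_k| \ge 2$, palindromic symmetry forces its second character to equal the second-to-last character of $wab$, namely $a$. I split on $|w_k| \in \{1,\, 2,\, \ge 3\}$; the three cases will correspond exactly to the three candidate values in the statement.

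For $|w_k| = 1$, i.e.\ $w_k = b$, the prefix $w_1 \cdots w_{k-1}$ is a palindromic factorization of $wa$ of parity $1{-}j$, giving $\pl^{1-j}(wa) \le k - 1$. Together with the trivial upper bound $\pl^j(wab) \le \pl^{1-j}(wa) + 1$ (append the letter $b$ to any parity-$(1{-}j)$ factorization of $wa$), this pins $\pl^j(wab) = \pl^{1-j}(wa) + 1$. For $|w_k| = 2$ the palindrome condition forces $a = b$, so $w_k = aa$ and $w_1 \cdots w_{k-1}$ is a palindromic factorization of $w$; the analogous pairing with the trivial bound $\pl^j(wab) \le \pl^{1-j}(w) + 1$ yields $\pl^j(wab) = \pl^{1-j}(w) + 1$.

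The main case, and the only one I expect to require thought, is $|w_k| \ge 3$. Here I will write $w_k = b z b$ with $z$ a nonempty palindrome (the middle of a palindrome is itself a palindrome). Since $wab$ ends with $bzb$, $wa$ ends with $bz$, so $z$ is a palindromic suffix of $wa$ and $w_1 \cdots w_{k-1} \cdot b \cdot z$ is a palindromic $(k{+}1)$-factorization of $wa$ of parity $1{-}j$. This gives $\pl^{1-j}(wa) \le k + 1$, i.e.\ $\pl^j(wab) \ge \pl^{1-j}(wa) - 1$; combined with $\pl^j(wab) \le \pl^{1-j}(wa) + 1$ and the parity constraint, this pins $\pl^j(wab) \in \{\pl^{1-j}(wa) - 1,\, \pl^{1-j}(wa) + 1\}$.

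The only boundary subtlety is when $\pl^j(wab) = \infty$: one notes that then $\pl^{1-j}(wa) = \infty$ as well (otherwise extending an optimal factorization of $wa$ by the letter $b$ would produce a finite factorization of $wab$), so under the convention $\infty \pm 1 = \infty$ the set membership still holds.
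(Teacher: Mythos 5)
Your proof is correct and takes essentially the same route as the paper: both analyze the last factor $w_k$ of an optimal factorization of $wab$ and split according to its shape ($w_k=b$; $w_k$ of length $2$, forcing $a=b$; $w_k=bzb$ with $z\ne\varepsilon$), in the last case refactoring the tail as $b\cdot z$ to get a $(k{+}1)$-factorization of $wa$. The only cosmetic difference is that in the long-factor case you settle for membership in $\{\pl^{1-j}(wa)-1,\pl^{1-j}(wa)+1\}$ via upper/lower bounds plus parity rather than pinning the exact value (which is all the lemma needs), and you additionally handle the $\pl^j(wab)=\infty$ edge case that the paper's proof leaves implicit.
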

\begin{proof}
Let $k=\pl^j(wab)$. Consider all palindromic factorizations of the form $wab=w_1\cdots w_k$. Three cases are possible.\\
1. There is a factorization with $w_k=b$. Then $k = \pl^{1-j}(wa)+1$.\\
2. There is a factorization with $w_k=bub$, $u\ne\varepsilon$, and no factorization with $w_k=b$. Then $wa$ has no $(k{-}1)$-factorization, but has a $(k{+}1)$-factorization $w_1\cdots w_{k-1}bu$; so $k = \pl^{1-j}(wa)-1$.\\ 
3. $w_k=ab$ in each factorization (so $a=b$). Then $k = \pl^{1-j}(w)+1$.
\end{proof}

As above, we set $t=\lfloor \frac{\log n}{8}\rfloor$. \emph{Double chunks} (called just \emph{chunks} if no confusion arises) are segments $A=A[1..h]=(A^0[1..h],A^1[1..h])$ of $\PL$ or $\PRE[p]$. We encode them using $O(t)=O(\log n)$ bits. The difference with the case of palindromic length is that we store four explicit values: $A^0[1],A^1[1],A^0[2],A^1[2]$. Subsequent elements are encoded by 2-bit codes associated with the cases of Lemma~\ref{l:compress}, the unused 2-bit code indicates the end of a shorter chunk. Below we demonstrate a problem with this encoding and show the solution. 

Let us consider the graphs of two functions $f_1(i)=\pl^{i\bmod 2}[i]$ and  $f_2(i)=\pl^{(i+1)\bmod 2}[i]$ ($f_2=\infty$ for small values of $i$ until two consecutive equal letters occur in $s$). At most points, $|f_j(i)-f_j(i{-}1)|=1$; if $f_j(i)-f_j(i{-}1)<-1$ (and thus $f_j(i)=f_{3-j}(i{-}2)$ by Lemma~\ref{l:compress}, case 3), we say that $f_j$ has a \emph{drop at $i$}. In the left graph in Fig.~\ref{f:zigzags}, drops are dash lines.

\begin{figure}[htb]
\centering
\vspace{-2mm}
\includegraphics[scale=0.9, trim= 30 695 300 31, clip]{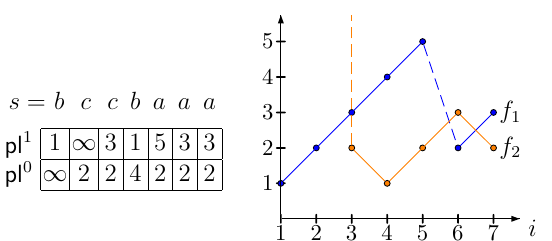}
\includegraphics[scale=0.9, trim= 140 695 300 31, clip]{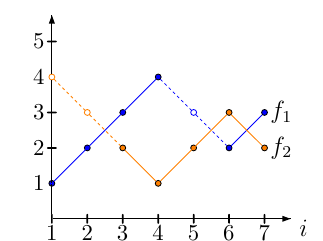}
\caption{\small even/odd palindromic length visualized by $f_1(i)=\pl^{i\bmod 2}[i]$ and  $f_2(i)=\pl^{(i+1)\bmod 2}[i]$. On the right, the graphs of $f_1$ and $f_2$ after smoothing the $\PL$ array.}
\label{f:zigzags}
\vspace{-2mm}
\end{figure}

The drops cause the following problem: taking minimum of two chunks can result in a chunk having no valid encoding, violating condition (iv). For example, taking the minimum of two valid chunks
\arraycolsep=2pt
$\small
\begin{array}{|c|c|c|c|}
\hline
\cdots&3&7&3\\
\hline
\cdots&6&4&4\\
\hline
\end{array}
$ and
$\small
\begin{array}{|c|c|c|c|}
\hline
\cdots&1&7&3\\
\hline
\cdots&6&2&6\\
\hline
\end{array}
$\,, we get the chunk 
$\small \begin{array}{|c|c|c|c|}
\hline
\cdots&1&7&3\\
\hline
\cdots&6&2&\pmb{4}\\
\hline
\end{array}$\,. The marked element does not satisfy any alternative from Lemma~\ref{l:compress}. To remedy this, we consider the following \emph{smoothing} operation on the computed array $\PL[1..n{-}1]$: for each $i\in[1..n{-}2]$, $j\in\{0,1\}$, replace $\pl^j[i]$ by $\min\{\pl^j[i], \pl^{(j+1)\bmod 2}[i{+}1]+1,\ldots,\pl^{(j+n-1-i)\bmod 2}[n{-}1]+n-1-i\}$. The result of smoothing can be seen in Fig.~\ref{f:zigzags} (right graph). The main property of smoothing is  
\begin{lemma} \label{l:smooth}
Smoothing of the array $\PL[1..n{-}1]$ does not affect the value $\PL[n]$ computed by rule \eqref{e:dp}.
\end{lemma}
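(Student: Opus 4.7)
The plan proves the equality $\pl^j[n]_{\mathrm{new}}=\pl^j[n]_{\mathrm{old}}$ by establishing both inequalities. The easy direction is that each smoothed value is defined as a minimum containing the original (the $k=0$ term), so it is at most the original; plugging weaker bounds into \eqref{e:dp} can only decrease its output, giving $\pl^j[n]_{\mathrm{new}}\le\pl^j[n]_{\mathrm{old}}$.

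For the reverse direction, it suffices to show that for every $i\in S_n$ and every $k\in\{0,\dots,n-i\}$,
\[
\pl^j[n] \;\le\; \pl^{(1-j+k)\bmod 2}[i+k-1] + k + 1,
\]
since these are exactly the values over which the smoothed rule \eqref{e:dp} takes its minimum (after adding $1$). Writing $m=\pl^{(1-j+k)\bmod 2}[i+k-1]$ and letting $F=w_1\cdots w_m$ be an optimal factorization of $s[0..i+k-1]$, my task is to construct a palindromic $j$-parity factorization of $s[0..n]$ of length at most $m+k+1$. In the principal case, $F$ splits cleanly at $i$: some prefix $w_1,\dots,w_t$ covers $s[0..i-1]$ and $w_{t+1},\dots,w_m$ lie inside the palindrome $s[i..n]$. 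Reflecting each $w_r$ ($r>t$) through the center of $s[i..n]$ yields a palindrome $w_r^\ast$ of the same length, and together $w_m^\ast,\dots,w_{t+1}^\ast$ tile the mirror region $s[n-k+1..n]$. When $L := n-i+1 \ge 2k+1$, the middle segment $s[i+k..n-k]$ is itself a palindrome by the same reflection symmetry, and the assembly
\[
w_1,\dots,w_t,\; s[i], s[i+1],\dots,s[i+k-1],\; s[i+k..n-k],\; w_m^\ast,\dots,w_{t+1}^\ast
\]
has exactly $t+k+1+(m-t)=m+k+1$ palindromes of parity $(1-j+k)+k+1\equiv j\pmod 2$, as required.

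The remaining cases are handled by small variants of this construction. When $L\le 2k$, the middle segment degenerates (the mirror tiles overlap or meet) and one combines $w_1,\dots,w_t$ with either $s[i..n]$ as a single palindrome, with its decomposition into $L$ single-letter palindromes, or with one of the intermediate odd-length factorizations obtained by peeling off the outermost letter pair (palindromic by Lemma~\ref{l:palbasic}), selecting the variant whose parity matches $j$ and whose length stays within $m+k+1$. When $F$ contains a palindrome straddling position $i$, one exploits the periodicity that the straddling palindrome inherits from its overlap with $s[i..n]$ via Lemma~\ref{l:palbasic} to reorganise $F$ at $i$ without exceeding the bound. The main technical obstacle is to ensure that every combination of parities of $t$, $L$, and $k$ admits a construction yielding length at most $m+k+1$; this is where the multiple decompositions of $s[i..n]$ are essential.
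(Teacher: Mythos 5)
Your skeleton is the right one and matches what the paper needs: the easy direction (smoothing only lowers entries) is fine, and your reduction of the hard direction to the per-term bound $\pl^j[n]\le \pl^{(1-j+k)\bmod 2}[i+k-1]+k+1$ for every $i\in S_n$ and $0\le k\le n-i$ is exactly equivalent to what must be shown. But your principal-case construction is needlessly restrictive, and the cases you yourself identify as the crux are not proved. There is no need to require that $F$ ``splits cleanly'' at $i$ and to reflect $w_{t+1},\dots,w_m$: keep the whole factorization $F$ of $s[0..i+k-1]$ untouched, take the central palindrome $v=s[i+k..n-k]$ as one factor, and cover the mirror region $s[n-k+1..n]=\lvec{s[i..i+k-1]}$ by $k$ single letters; this yields $m+k+1$ factors of parity $j$ for an arbitrary $F$, and is precisely the paper's construction ($s[i..n]=uv\lvec{u}$ with $u=s[i..i+k-1]$). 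Your ``straddling factor'' case is thus an artifact of your choice to reflect factors, and your sketch for it (``reorganise $F$ at $i$ via Lemma~\ref{l:palbasic}'') proves nothing.

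The genuine gap is the degenerate case $L\le 2k$, which really does occur among the terms you must bound (already $v=\varepsilon$ when $2k=L$) and which your listed variants do not cover. Every alternative you offer there is built on $w_1,\dots,w_t$ alone and realizes only two counts modulo 2: $t+1$ (using $s[i..n]$ whole, or peeling outer letter pairs) and $t+L$ (all single letters). Nothing guarantees that one of these equals $j$ while respecting the bound; when $L$ is odd the two parities coincide, and the single-letter variant can also exceed $m+k+1$, since $t+L>m+k+1$ whenever few factors of $F$ lie beyond $i$ and $L$ is close to $2k$. You concede this openly (``the main technical obstacle is to ensure that every combination of parities\dots''), so the hard direction is not established. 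Note that half of this case is actually easy if you keep $F$ intact: appending the remaining $L-k\le k$ letters of $s[i+k..n]$ as singletons gives $m+L-k\le m+k+1$ factors, of parity $j$ exactly when $L$ is odd. The true crux is $L$ even with $2k\ge L$, where a parity-correcting idea is required; neither your sketch nor a mechanical case split over the parities of $t$, $L$, $k$ supplies one, so as written the proposal does not prove the lemma.
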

\begin{proof}
Assume that we applied smoothing and then computed $k=\pl^{k\bmod 2}[n]$ by rule \eqref{e:dp} getting the minimum $k{-}1$ as the element $\pl^{(k-1)\bmod 2}[r]$. Since the minimum cannot increase after smoothing, we just need to check that $s[1..n]$ indeed has a palindromic $k$-factorization. This is obvious if $\pl^{(k-1)\bmod 2}[r]$ was not changed by smoothing. Otherwise, $k{-}1=\pl^{(k-1+i-r)\bmod 2}[i]+i-r$ for some $i>r$. Since $s[r{+}1..n]$ is a palindrome (see \eqref{e:dp}), we factor it as $uv\lvec{u}$, where $u=s[r{+}1..i]$. So, $s[1..n]=s[1..i]v\lvec{u}$ indeed has a $k$-factorization: $s[1..i]$ can be factored into $k{-}1{-}i{+}r$ palindromes and $\lvec u$ into $i{-}r$ 1-symbol palindromes.
\end{proof}

\begin{remark} \label{r:smooth}
1) We extend smoothing to subarrays of the form $\PL[l..r]$: for each $i\in[l..r{-}1]$, $j\in\{0,1\}$, replace $\pl^j[i]$ by $\min\{\pl^j[i], \pl^{(j+1)\bmod 2}[i{+}1]+1,\ldots,\pl^{(j+r-i)\bmod 2}[r]+r-i\}$. The resulting values are always between the values from the original $\PL$ and those from smoothed $\PL$, so   Lemma~\ref{l:smooth} stays true for this \emph{local} smoothing.\\
2) From formula \eqref{e:pre} one can conclude that smoothing also works for the arrays $\PRE[p]$. Namely, if $\pre^j[p][i]$ and $\pre^{1-j}[p][i{-}1]$ are used for computing $\pl^{1-j}[n]$ and $\pl^j[n{+}1]$ respectively (i.e, a $p$-series survives during these two iterations), then  $\pre^{1-j}[p][i{-}1]$ can be replaced by $\min\{\pre^{1-j}[p][i{-}1], \pre^j[p][i]+1 \}$. This replacement can be iterated for a range.\\
3) Observations 1), 2) allow one to use the following principles: extracting a chunk from $\PL$ or $\PRE$, smooth it; concatenating two chunks in $\PRE$, smooth the result; taking the minimum of two smoothed chunks of different length, extend the shorter chunk with dummy values satisfying $f_j(i)-f_j(i{+}1)=1$ (resp., $=-1$) for left (resp., right) extensions.
\end{remark}
 
The success of the approach described in Remark~\ref{r:smooth} relies on two lemmas. 

\begin{lemma} \label{l:O1smooth}
Smoothing a (double) chunk can be done in $O(1)$ time.
\end{lemma}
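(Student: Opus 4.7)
The plan is to realize the smoothing of a chunk as a single table lookup in the spirit of Lemma~\ref{l:tableop}. The starting observation is that smoothing is shift-invariant: adding a constant $c$ to every entry of $A$ and then smoothing yields the same result as smoothing first and then adding $c$, because the formula $S^j[i]=\min_{k\ge 0}A^{(j+k)\bmod 2}[i+k]+k$ commutes with uniform shifts. Consequently, the smoothed chunk is determined, modulo restoring $A^0[0]$ at the end, by the three relative initial values $A^1[0]-A^0[0]$, $A^0[1]-A^0[0]$, $A^1[1]-A^0[0]$ together with the $2(t-2)$ two-bit case codes already stored in $A$.

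Next, I show that the smoothed chunk admits a valid compressed encoding. Splitting the minimum defining $S^{1-j}[i]$ at $k=0$ yields the identity $S^{1-j}[i]=\min\bigl(A^{1-j}[i],\,S^j[i+1]+1\bigr)$, whence $S^j[i+1]\ge S^{1-j}[i]-1$. For the opposite inequality, either the minimum is attained at the second term, giving $S^j[i+1]=S^{1-j}[i]-1$, or $S^{1-j}[i]=A^{1-j}[i]$, in which case case~1 of Lemma~\ref{l:compress} (a universally valid upper bound via the factorization $(wa)\cdot b$) gives $S^j[i+1]\le A^j[i+1]\le A^{1-j}[i]+1=S^{1-j}[i]+1$. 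Thus $|S^j[i+1]-S^{1-j}[i]|\le 1$ throughout the smoothed chunk, so case~3 of Lemma~\ref{l:compress} is never needed in the output encoding, and the smoothed chunk fits in the same 2-bit-code scheme as the input and can be fed without decompression to subsequent chunk operations.

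Finally, I build the table. The three relative initial values can a priori be large because case~3 of Lemma~\ref{l:compress} allows sudden drops inside an input chunk, but following the saturation trick of item~7 of Lemma~\ref{l:tableop}, if two of the initial entries differ by more than $2t$ the larger is effectively $+\infty$ throughout the chunk and a bounded case analysis over which of the four initial entries are ``saturated'' covers every such configuration; after the case split the remaining relative initials all lie in $[-2t,2t]$. The resulting input key---saturation flags, three relative initials of $O(\log t)$ bits each, and $2(t-2)$ two-bit codes---occupies $O(\log n)$ bits, giving a precomputed table with $O^*(n^{1/2})$ entries of $O(\log n)$ bits each, comfortably inside the $O^*(n^\alpha)$ budget; the encoded smoothed chunk is then reconstructed by adding $A^0[0]$ back to the table's answer, all in $O(1)$ word operations. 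The main obstacle I anticipate is justifying the saturation step---showing that once an entry is marked ``infinite'' its precise value is irrelevant to every minimum in the smoothing recurrence---which reduces to checking that finite entries in the input chunk span a range of at most $O(t)$ around their minimum, so that the tabulated answer really is a function of only the $O(\log n)$-bit key.
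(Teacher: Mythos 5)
Your plan is the paper's own: shift-invariance, a table keyed by clamped initial difference(s) plus the 2-bit codes, and saturation of large differences; your explicit check that the smoothed chunk satisfies $|S^j[i+1]-S^{1-j}[i]|\le 1$ (so no case-3 codes are needed in the output) is correct and is a nice proof of a fact the paper only asserts. The genuine gap is in the saturation step, exactly the place you flag as the main obstacle, and the justification you propose for it fails. The claimed reduction is false: the finite entries of a chunk need \emph{not} lie within $O(t)$ of their minimum. At one position $\pl^0$ and $\pl^1$ may differ by $\Theta(n)$ (or one may be infinite) with no case-3 event inside the chunk; then the ``high'' diagonal ($f_1$ or $f_2$) stays far above the ``low'' one for the whole chunk, all its entries are finite, and smoothing leaves it unchanged. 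In precisely this situation your reconstruction recipe breaks: the smoothed values on the high diagonal equal the original ones, which cannot be recovered as ``table answer plus $A^0[0]$'' once their offset from $A^0[0]$ has been clamped. What is needed (and what the paper proves for $|\delta|>2t$) is the dichotomy: either the saturated diagonal has no drop inside the chunk, and the table must answer ``copy these positions from the input'', or its leftmost drop at position $i$ forces $\tilde{f}_j(i-r)=f_j(i)+r$ for $r<i$, after which everything is tied to the low diagonal and only the \emph{sign} of the clamped difference matters.

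There is a second concrete problem with the key itself. The four explicit entries split between the two diagonals ($A^0[0]$ and $A^1[1]$ lie on one of $f_1,f_2$; $A^1[0]$ and $A^0[1]$ on the other), so if $A^0[0]$ happens to lie on the high diagonal, two of your three differences are saturated simultaneously, and after clamping them individually against $A^0[0]$ you have lost the $\pm 1$ relation between the two low-diagonal explicit entries --- information that is certainly needed to produce the output. A case analysis over ``which entries are saturated'' can be repaired by re-referencing the low entries among themselves, but as written the key is lossy. This is exactly why the paper collapses the information to a single inter-diagonal difference $\delta=f_1(2)-f_2(2)$, recovers the position-1 values from the codes, and invokes Remark~\ref{r:drop} to exclude the one configuration (drops at two consecutive positions) in which that recovery would fail. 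So keep your encodability argument, but replace the ``finite entries span $O(t)$'' claim by the drop/no-drop dichotomy (with a ``return input unchanged'' mode for the table), and make sure intra-diagonal differences are never clamped.
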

\begin{proof}
Suppose that a chunk $A[1..h], h\le t$, is given, so we know the numbers $f_1(1), f_2(1)$, $f_1(2),f_2(2)$, and the 2-bit codes $b_3^1,\ldots,b_h^1,b_3^2,\ldots,b_h^2$, where the functions $f_1,f_2$ are defined as above and $b_i^j$ encodes the case from Lemma~\ref{l:compress} for $f_j(i)$. We write $\tilde{A},\tilde{f}_1,\tilde{f}_2$ for the chunk and the functions after smoothing. If $f_j$ has a drop at 2, we can replace $f_j(1)$ with $f_j(2){+}1$ without affecting $\tilde{f}_j$. With this reservation, we can restore $f_1,f_2$ from $f_1(2),f_2(2), b_2^1,\ldots, b_h^1, b_2^2,\ldots,b_h^2$. Indeed, compute $f_j(1)$ for $j=1,2$ from $f_j(2)$ and $b_2^j$; then consider $f_j(3)$. If $f_j$ has no drop at 3, it is computed from $f_j(2)$, $b_3^j$. If it has this drop, then $f_{3-j}(2)\le f_{3-j}(1)+1=f_j(3)\le f_j(2)-3\le f_j(0)-1$ (by 0 we mean the position preceding the first position in $A$), and so $f_{3-j}$ has no drop at 2. Then the computed value $f_{3-j}(1)$ is true, and we put $f_j(3)=f_{3-j}(1)+1$. Knowing $f_j(2)$ and $f_j(3)$, it is easy to reconstruct the rest.

For any constant $c$, one can obtain $\tilde{f}_j+c$ from $f_j+c$. So to get a table operation we replace $f_2(2)$ with 0 and $f_1(2)$ with $\delta= f_1(2)-f_2(2)$. Now observe that if $|\delta|\ge 2h$, only two cases are possible, and they are easy to distinguish: either there are no drops, so $\tilde{f}_j=f_j$ for $j=1,2$, or the leftmost drop is in $f_j$ at position $i$, and then  $\tilde{f}_j(i-r)=f_j(i)+r$ for all $r<i$. So in this case only the sign of $\delta$ matters, and we assign $\delta=\pm\infty$. Thus we have got a table operation which, given $(2t-2)$ 2-bit codes and a number $\delta$ with $O(t)$ distinct values, returns a compressed chunk; adding $f_2(2)$ to the explicit values in this chunk, one gets $\tilde{A}$.
\end{proof}

\begin{lemma}[$\mathbf{*}$] \label{l:tableop2}
The following operations with (double) chunks can be performed in $O(1)$ time: (1) incrementing all elements of a chunk, (2) extracting an element from a chunk, (3) extracting a chunk at any position, (4) concatenating two chunks, (5) extending a smoothed chunk with dummy values, (6) taking the minimum of two smoothed chunks. 
\end{lemma}

\subparagraph*{Algorithm~EOPL.}
We describe one phase of Algorithm~EOPL computing $\pl^0(s)$ and $\pl^1(s)$ for a string $s$. 
The definition of phase and the idea to ``predict'' the next $t$ input symbols and perform range updates by means of table operations are the same as for palindromic length.

\smallskip\noindent
\textit{Prerequisites.} 
P1. Arrays $\PL$ and $\PRE[p]$ are stored as sequences of compressed length-$t$ chunks (the last chunk in a sequence can be short). We maintain a \emph{work chunk} $W$ to compute the reversal of a new chunk of $\PL$ during the current phase; we move symbols from $W$ to $\PL$ one by one and thus avoid the reversal operation. All chunks applied to $W$ are aligned to its current right end (corresponding to the current iteration). List $\wait$ stores new palindromes and palindromes that changed periods, to perform $\PRE$ updates at the end of the phase.\\
P2. Using Lemma~\ref{l:smooth} and Remark~\ref{r:smooth}, we smooth chunks extracted from $\PL$ and extend short smoothed chunks before applying $\min$ operations. Each array $\PRE[p]$ consists of smoothed chunks and $W$ is also smoothed; drops may occur between the last element of $\PL$ and the first element of $W$ as a result of processing 2-letter suffix-palindromes at step F3a.\\
P3. \emph{Big} $p$-series ($p\ge t$) and \emph{small} $p$-series ($p<t$) are processed separately.

\smallskip\noindent
\textit{First iteration.}
F1. Read the next symbol (say, $s[n]$), perform $\append(s[n])$; at this moment\\
a. the iterator stores $s[1..n]$ and returns the list $\dead$;\\
b. the array $\PL[1..n{-}1]$ and the arrays $\PRE[p]$ for $s[1..n{-}1]$ are correctly computed;\\
c. the chunk $W$ is initialized by $\infty$'s, the list $\wait$ is empty;\\
d. the maximum number of iterations in the phase is set to $t'=\min\{n-\len(\maxPal), t\}$ (if the first number is smaller, $\maxPal$ must change after at most $t'$ iterations).\\ 
F2. Loop through the list of big $p$-series of $s[1..n{-}1]$:\\
a. compute $\leftside[p]$ and clear $\PRE[p]$ if necessary;\\
b. compute $\ttl'[p] = \min\{\ttl_{n-1}[p],t'\}$ using symmetry inside palindromes;\\
c*. using $\leftside[p]$ and $\ttl'[p]$, update $\PRE[p]$, $W$, and $\wait$ (details in a separate item below).\\
F3. Process new suffix palindromes with centers $n-\frac 12$ (if $s[n{-}1]=s[n]$) and $n$:\\
a. update $W$ directly using the chunks of $\PL$ ending at position $n{-}1$ (and possibly $n{-}2$);\\
b. using symmetry again, check whether $n$ and/or $n-\frac 12$ remain centers of suffix-palindromes after the last iteration of the phase; add the center(s) with the answer ``yes'' to the list $\wait$.
\\
F4. Small series and finalization:\\
a. extract from $W$ its last element $m= (m_0,m_1)$ and delete it from $W$;\\
b. assign $(\pl^0[n],\pl^1[n])=(m_0,m_1)$ and process small $p$-series of $s[1..n{-}1]$ directly by Algorithm~\ref{a:nlogn}, getting the final value of 
$\PL[n]$;\\
c. update arrays $\PRE[p]$ with all palindromes from the list $\dead$.

\smallskip\noindent
\textit{Subsequent iterations.}
S1. Read $s[i]$, compare it to $s[i{-}1{-}\len(\maxPal)]$.\\
S2. If $s[i]$ extends the longest suffix-palindrome, the phase continues: perform $\append(s[i])$ and then the operations from steps F3 and F4, replacing $n$ by $i$.\\
S3. If $s[i]$ breaks the longest suffix-palindrome, the phase is aborted:\\
a. update the lists $\PRE[p]$ using suffix-palindromes from the list $\wait$, clear $\wait$;\\
b. start the next phase with $\append(s[i])$.\\
S4. If there was no abortion and all $t'$ symbols are processed, the phase is terminated:\\
a. update the lists $\PRE[p]$ using suffix-palindromes from $\wait$, clear $\wait$;\\
b. start the next phase with $\append(s[n{+}t'])$. 

\smallskip\noindent
\textit{F2c detailed.} 
Let $x$ be the center of $\tail(p)$, $\pttl'(x)=\min\{\pttl_{n-1}(x),t'\}$, 
$d= \min\{\ttl'(p), \pttl'(x)\}$. 
Update $\PRE[p]$ with a length-$d$ chunk from $\PL$ and $W$ with the updated length-$d$ chunk from $\PRE[p]$. If $\ttl'(p) < \pttl'(x)$ (after $d$ iterations period $p$ dies, but $\head(p)=\tail(p)$ survives, becoming a palindrome with a bigger period), additionally update $W$ with a chunk of length $\pttl'(x)$; if $\pttl'(x)\ge t'$, add $x$ to $\wait$. Note that if $p$-series gets new tail $x{+}\frac p2$ during the phase, then at the start of the phase $x{+}\frac p2$ was the center of the head of some $p'$-series, survived the death of that series, and thus was used for an additional update of $W$. Necessary updates of $\PRE[p]$ are carried when the lists $\dead$ and $\wait$ are processed. All the same stays true for $x{+}p$ which is the next potential tail of the $p$-series.

\begin{lemma}\label{l:lincorrect}
Algorithm EOPL correctly computes the array $\PL$.
\end{lemma}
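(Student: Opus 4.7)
My plan is to reduce correctness of Algorithm~EOPL to correctness of Algorithm~\ref{a:nlogn} (Proposition~\ref{p:nlogn}) by induction on phases. The inductive hypothesis at the start of a phase is that the already-committed prefix of $\PL$ and all currently stored arrays $\PRE[p]$ agree with their counterparts after running Algorithm~\ref{a:nlogn} on the same prefix of $s$; prerequisites F1(a--c) give the base case. What remains is to show that, once the phase has processed its $t'$ letters, the inductive hypothesis is restored for the extended prefix.

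First I would use the phase assumption that $\maxPal$ stays fixed during the $t'$ iterations. Under this assumption, no suffix-palindrome of $s[0..n{-}1]$ dies, and any suffix-palindrome that acts as a tail of a big ($p\ge t$) series during the phase is already the tail (or head) of such a series at the start of the phase; hence the list of big series collected at F2 is exactly the one relevant for rule~\eqref{e:nlogn} throughout the phase. With $d=\min\{\ttl'[p],\pttl'(x)\}$, Lemma~\ref{l:ttl} together with the definition of $\ttl'$ identifies $d$ as precisely the range of iterations in which the center $x$, as tail of the $p$-series, contributes to $\PL$. The case $\ttl'[p]<\pttl'(x)$, where period $p$ dies but $x$ migrates into a longer-period series, is handled by the additional update of $W$ and by placing $x$ in $\wait$; the potential future tails $x{+}p/2$ and $x{+}p$ are treated analogously.

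Next I would check that the chunk arithmetic faithfully simulates $t'$ executions of lines~8--10 of Algorithm~\ref{a:nlogn}. For each big series, the $\min$ of a length-$d$ slice of $\PL$ into $\PRE[p]$, and the subsequent $\min$ of the updated slice into $W$, are realised by Lemma~\ref{l:tableop}(7) applied to smoothed chunks; smoothing is performed via Lemma~\ref{l:O1smooth} and, by Lemma~\ref{l:smooth}, never alters the final $\PL$ values produced by rule~\eqref{e:dp}. Small series are treated in F4(b) by direct execution of Algorithm~\ref{a:nlogn} on the extracted element $m$, so their correctness is inherited from Proposition~\ref{p:nlogn}. At the end of the phase, S3(a)/S4(a) flush $\wait$ and F4(d) flushes $\dead$ into the appropriate $\PRE[p]$, restoring full synchronisation with Algorithm~\ref{a:nlogn}.

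The main obstacle I anticipate is the timing of the $\PRE[p]$ refreshes. Algorithm~EOPL deliberately postpones inserting into $\PRE[p]$ any center that entered $\wait$ or $\dead$ during the phase, and I must show that no such deferred contribution is needed by any $\PL[i]$ computed inside the phase. Concretely, every center in $\wait$ is either a brand-new suffix-palindrome born at some iteration $i$ (so it cannot appear in $\PRE[p]$ before iteration $i{+}1$) or a former head whose old period $p$ has just died (so the ongoing F2c updates already cover its contributions at period $p$, while its new period is well separated from $p$ by Lemma~\ref{l:ttl} and so cannot affect any $\PL[i']$ still inside the phase). Once this timing argument is carefully laid out, an iteration-by-iteration comparison with Algorithm~\ref{a:nlogn} completes the proof.
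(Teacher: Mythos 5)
Your overall strategy (induction over phases with the hypothesis that the committed prefix of $\PL$ and the stored $\PRE[p]$ are correct at the start of each phase, smoothing justified by Lemmas~\ref{l:smooth} and~\ref{l:O1smooth}, and a timing argument for deferred $\wait$/$\dead$ flushes) is essentially the paper's argument. However, there are two concrete problems. First, the claim that under the phase assumption ``no suffix-palindrome of $s[0..n{-}1]$ dies'' is false: the assumption only keeps the \emph{longest} suffix-palindrome alive, while shorter suffix-palindromes routinely die during a phase by hitting $\leftside[p]$ — this is precisely why the algorithm computes $\ttl'[p]$ and $\pttl'(x)$, returns the $\dead$ list from $\append$, and has step F4d at all. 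Your own later discussion of F4d contradicts this claim, so the step ``the list of big series collected at F2 is exactly the one relevant throughout the phase'' needs the finer argument via $d=\min\{\ttl'(p),\pttl'(x)\}$ and the migration of surviving centers into longer-period series, not a blanket no-death assertion.

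Second, and more importantly, you only treat phases that run their full $t'$ iterations, whereas a phase can be \emph{aborted} at step S3 when the input symbol breaks the longest suffix-palindrome. In that case F2c has already written speculative updates into $\PRE[p]$ and $W$ for iterations that never happen, and the inductive hypothesis for the next phase does not follow without an argument that these over-extended updates are harmless. The paper's proof devotes its first main point to exactly this: the extra entries in $W$ are never used because $W$ is transferred to $\PL$ one element per actually processed iteration (F4a), and the extra entries in $\PRE[p]$ are harmless because the aborting symbol differs from the predicted one and hence breaks period $p$, so $\PRE[p]$ will be cleared (via the $\leftside[p]$ check) before it is consulted again. Without this piece your induction step fails for aborted phases, so the proposal as written does not yet establish the lemma.
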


\begin{proof}
The computation of $\PL$ by Algorithm~EOPL differs from the correct computation by Algorithm~\ref{a:nlogn} in a few points. The use of smoothed chunks is justified by Lemmas~\ref{l:smooth},~\ref{l:O1smooth}, so below we take smoothing into account speaking about correctness of the arrays $\PRE[p]$. We prove the lemma by induction, with an obvious base and F1b as the hypothesis; more precisely, we assume that at the start of a new phase with $s[n]$, the array $\PL[1..n{-}1]$ and all arrays $\PRE[p]$ are correct, where $p$ runs through the set of live periods of $s[1..n{-}1]$ (i.e., $s[1..n{-}1]$ has a $p$-periodic suffix with at least one $p$-periodic palindrome in it). 

In an aborted phase, some predictions in F2c are made beyond the actual end of the phase (the phase processes $s[n..n{+}i{-}1]$, and updates are made for more than $i$ elements of $W$ and $\PRE[p]$). For $W$, this does not matter, because $W$ is translated to $\PL$ one element per iteration (F4a) and is initialized at the beginning of each phase (F1c). For $\PRE[p]$, the situation means that $p$ will become dead at the $(n{+}i)$th iteration: the actual symbol $s[n{+}i]$ differs from the predicted symbol which preserved the period. 

Next, if $p$ is alive for $s[1..n{+}i{-}1]$, all necessary updates for $\PRE[p]$ were made at steps F2c, F4c (useful if a palindrome dies while its period survives), and S3a/S4a. Note that repeated updates using the same palindrome (say, first at step F2c and then at F4b) cannot harm.
Finally, all palindromes ending in $s[n{-}1..n{+}i{-}2]$ were used to update $W$, so $\PL[n..n{+}i{-}1]$ is computed according to rule \eqref{e:dp}.  The result now follows.
\end{proof}



\subparagraph*{Details and Analysis of Algorithm~EOPL.} To prove Theorem~\ref{MainTheorem}, it remains to show that the details of computation can be organized to provide the linear-time performance. The key part is maintaining $\PRE[p]$ arrays.

\begin{lemma} \label{l:compresspre}
All arrays $\PRE[p]$ can be stored in a compressed form in a data structure requiring $O(n)$ space, $O(1)$ time per deletion of an array, and amortized $O(1)$ time per operation with any chunk.
\end{lemma}
\begin{proof}
We use dynamic arrays (like \emph{vectors} in C$^{++}$). Such an array has \emph{size} (space in use) and \emph{capacity} (allocated space). When the size increases and reaches the capacity, the latter doubles; this step rebuilds the array to provide a constant-time access to its elements and takes the time linear in its size. Thus, adding an element can be done in amortized $O(1)$ time, an existing element can be modified in $O(1)$ time, an array can be cleared by setting its size to 0, and the total allocated space is proportional to the maximum size reached.

For each array $\PRE[p]$, we store integers $R_1, I_1, R_2, I_2$ and dynamic arrays $F_1$ and $F_2$, both initially of size 0 and capacity 1. We also perform (amortized) constant-time chunk operations listed in Lemma~\ref{l:tableop2}. When we need to extract a chunk for updating the array $W$, we take the minimum of the corresponding chunks from $F_1$ and $F_2$. We refer to the evolution of $p$-series described at p.~\pageref{p:evolution}. Note that $\PRE[p]$ is always filled right to left (see  Fig.~\ref{f:pre},\,\ref{f:evolve}): starting with the index $i{-}1$ in the first pass, with the index $p{-}\ell{-}1$ in the second pass, and with the index $p{-}1$ in subsequent passes.
During its time-to-live, $p$-series is processed in three stages. To decide the current stage, the indicators described below are used. \\
1. A $p$-series appears at $n$th iteration as a palindrome $uvu$ centered at $x$; at this or one of subsequent iterations we get a first chunk to add to $\PRE[p]$. We set $R_1$ to the index of the first element added to $\PRE[p]$ at $n$th iteration; it is $\ttl_n(x)$ or $i-1$ in terms of Fig.~\ref{f:evolve}. Then we write the first chunk to $F_1$ and set $I_1$ to the last used index. After that, we possibly add more chunks until $x$ dies, updating $I_1$ respectively. (Each chunk is concatenated with the previous one such that all of them except the last one have length $t$.) At this moment, $I_1=0$ and the series dies. The indicator of stage 1 is $(\mathit{size}(F_2)=0) \wedge (I_1>0)$.\\
2. At one of subsequent iterations ($n{+}i{+}\ell$ in Fig.~\ref{f:evolve}) the series reborns with the new head/tail $x{+}\frac p2$. Getting the next chunk, we set $R_2$ to the index $p{-}\ell{-}1$, write the chunk, set $I_2$ to the last used index, and proceed with subsequent chunks, using concatenation and updating $I_2$. At some point we get $I_2\le R_1{+}1=i$. This means that the tail of the series has changed to the palindrome centered at $x{+}p$ either during the phase or immediately after it. All updates made after this change make no sense (we have written to $F_2$ the elements already written to $F_1$); so we set $I_2=i$ and wait for the next chunk (if the series survives the current phase, the chunk will appear when the list $\wait$ will be processed). From this point, we take minimum of the new chunks with the corresponding chunks of $F_1$ and then add the result to $F_2$. The stage continues until $I_2=0$. The indicator of stage 2 is $(I_1=0)\wedge (R_1<n-1)$.\\
3. Getting $I_2=0$ at the previous stage, we set $R_1=n-1, I_1=n$, and clear $F_1$. All subsequent chunks are written to $F_1$, in the way described in the previous stages. When we get $I_1\le i$ for the first time (change of the tail of the series), we set $I_1$ to $i$ and truncate the last chunk. When $I_1=0$ is reached, we reset $I_1=p$ and set a flag telling that each next chunk should be written as the minimum of the new chunk and the existing chunk at the same positions of the array. The indicator of stage 3 is $R_1=n-1$.

Thus, we perform each update in $O(1)$ time (amortized, due to the properties of dynamic arrays). For $j=1,2$, all chunks in $F_j$, except for the last one, have length $t$, so it is easy to find the argument for the ``extract a chunk'' operation in $O(1)$ time using $R_j$. We finally note that the total number of stored chunks is $O(n)$, each requiring $O(1)$ machine words.
\end{proof}


Let us prove time bounds for all steps of Algorithm~EOPL. All $\append$ queries require $O(n)$ time in total. By Lemma~\ref{l:left}, F2a requires $O(1)$ time; F2b is covered by Lemma~\ref{l:O1ttl} below.

\begin{lemma} \label{l:O1ttl}
For a big $p$-series, the value $\ttl'[p]$ can be computed in $O(1)$ time.
\end{lemma}
\begin{proof}
Let $x$ be the center of the head $(uv)^ru$ of the $p$-series. We compute $|u|=\len(x)\bmod p$ and $y=\cntr(|u|)$ as in Lemma~\ref{l:left}; $y'=2\maxPal-y$ is the center of the prefix-palindrome $u$ of the longest suffix-palindrome $s[i..n{-}1]$ of $s[1..n{-}1]$. Note that the longest suffix-palindrome of $s[1..n{+}t'{-}1]$ is $z=s[i{-}t'..n{+}t'{-}1]$ because $\maxPal$ does not change during a phase. Let $w$ be the longest palindrome with the center $y'$; using the query $\rad(y')$ we determine whether $w$ is located inside $z$ or begins outside it. In the latter case, $z$ has a $p$-periodic prefix ending with $(uv)^ru$ and then a $p$-periodic suffix beginning with $(uv)^ru$. So $\ttl_{n-1}[p_i]\ge t'$ and then $\ttl'[p_i]=t'$. In the former case, $w=\lvec{w}_1uw_1$, where $w_1$ is a proper prefix of $vu$ because $|vu|=p_i>t'\ge |w_1|$. Hence the period $p_i$ breaks on the left of $w$, which means that $\ttl_{n-1}[p_i]=|w_1|=(\len(y')-\len(y))/2$. All computations above take $O(1)$ time. The result now follows.
\end{proof}
F3b uses the same idea as Lemma~\ref{l:O1ttl}: for the center $n{+}i$, compare $\rad(2\maxPal-n-i)$ to the number of remaining iterations. Both F3a and F4a require, per iteration, $O(1)$ constant-time chunk operations from Lemma~\ref{l:tableop2}. Each update in step F4b can be also performed by $O(1)$ such operations: extract elements of $\PRE[p]$ and $\PL$, take minima, and ``return'' the updated element to $\PRE[p]$, extending 1-element chunk and using $\min$. Hence, by Lemma~\ref{c:short} below, F4b takes $O(p+i)$ time per phase of $i$ iterations, where $p$ is the longest small period.

\begin{lemma}\label{c:short}
In a phase of $i$ iterations, the number of times the \textbf{for} loop of Algorithm~\ref{a:nlogn} processes series which existed at the start of the phase and have periods $\le p$, is $O(p+i)$.
\end{lemma}
\begin{proof} We prove two claims.\\[1mm]
\textbf{Claim 1.} If $s[1..n]$ has $p$-series and $q$-series such that $p>q$ and $\ttl_n(p)\ge p$, then $\ttl_n(q)<p$.

Let $\head(p)=(uv)^ru$. If $\ttl(q)\ge p$, the string $s[n{-}q{+}1..n{+}p]$ of length $p+q$ has periods $p$ and $q$. Hence it has period $d=\gcd(p,q)$ by the Fine--Wilf theorem \cite{FineWilf}. Thus $vu$ is a $(p/d)$-power of a shorter word; so $(uv)^ru$ has period $d<p$, contradicting the definition of $p$-series. The claim is proved.\\[1mm]
\textbf{Claim 2.} 
Let a string $s[1..n]$ have series with periods $p=p_1>p_2>\cdots >p_\ell$ and let $i>0$. Then $\sum_{c=1}^{\ell} \min\{\ttl_n(p_c),i\} =O(p+i)$.

Divide the periods in two groups: those with $\ttl_n(p)<p$ and the rest. In the first group, the sum of $\ttl$'s is upper bounded by $\sum_{c=1}^\ell p_c$, which is $O(p)$ by Lemma~\ref{l:lognseries}. For the second group, $\ttl_n(p_c)$ is smaller than the previous period from this group by Claim 1. So we can take $i+\sum_{c=1}^\ell p_c=O(p+i)$ as the upper bound. The claim now follows.

The statement of the lemma is immediate from Claim 2 and definitions. 
\end{proof}

Let $P$ be the minimum period of the longest suffix-palindrome of $s$ at the beginning of the phase. In F2c, the updates to $W$ and $\wait$ require $O(1)$ time per series, which is $O(\log P)$ per phase by Lemma~\ref{l:lognseries}. The updates to $\PRE[p]$ arrays are considered in Lemma~\ref{l:O1rangeupdates} below.

\begin{lemma} \label{l:O1rangeupdates}
During a phase of length $i$, all range updates of $\PRE[p]$ at steps F2c and S3a/S4a spend $O(\log P+i)$ time in total. 
\end{lemma}
\begin{proof}
Altogether, there are $O(\log P)$ series of palindromes in a phase, $O(\log P)$ (from F2c) plus $O(i)$ (from F3b) palindromes in the list $\wait$ in a phase. 
Due to Lemma~\ref{l:tableop2}, it suffices to prove that each update requires $O(1)$ chunks. For big periods, this is obvious from the description of Algorithm~EOPL. Consider small periods. For each palindrome $w$, it is enough to update the range of $\PRE[p]$ corresponding to the iterations where $w=\tail(p)$. If this range begins with $\PRE[p][\ell]$ then it ends not later than $\PRE[p][0]$ is reached (recall that $\PRE[p]$ is filled right to left). Thus we can cut the range computed from time-to-live so that it will fit into one chunk. The lemma now follows.
\end{proof}  

Finally, for the updates of $\PRE$ at step F4c, the argument of Lemma~\ref{l:O1rangeupdates} about $O(1)$ chunks per update also works. All lists $\dead$ for the total run of Algorithm~EOPL contains $O(n)$ palindromes, because the iterator works in $O(n)$ time. Thus, for F4c we get the cumulative $O(n)$ time bound. For all remaining steps, the time bound we proved is $O(\log P+i)$ per phase, except $O(p+i)$ per phase in Lemma~\ref{c:short}. If $i=t$ (the phase is terminated), then we spend $O(t)$ time for $t$ iterations. Completing a short phase (either aborted or satisfying $t'<t$), we increase $\maxPal$ by at least $P/2$ in time $O(P)$. Since $\maxPal$ never decreases, we get the overall $O(n)$ time bound. Theorem~\ref{MainTheorem} is proved.


\section{Computing $k$-factorization}

A run of linear-time Algorithm~EOPL leaves the iterator containing the input string $s=s[1..n]$ and also the arrays $\pl^0[1..n]$, $\pl^1[1..n]$ for $s$. Let us use them to construct an explicit $k$-factorization in additional linear time. Let $k'=\pl^{k\bmod 2}[n]$. By Lemma~\ref{l:kvslen}, it suffices to build a $k'$-factorization of $s$ in $O(n)$ time, which can be done as follows.

\begin{algorithm*}
\caption{: Computing $k'$-factorization}
\label{a:fact}
\begin{algorithmic}[1]
\small
\For{($i \gets 1; i<k'; i{+}{+})$} build a list $\lst[i]$ of all positions   $j: \pl^{i\bmod 2}[j]=i$
\EndFor
\State $\mathit{end}\gets n$; $\mathit{factors}\gets \{\}$
\For{($i\gets k'-1; i>1; i{-}{-}$)} \Comment{main cycle: building the $k'$-factorization}
    \For{$j\in \lst[i]$}
        \If{$\len(\frac{j+1+\mathit{end}}{2}) \ge \mathit{end}-j$} break \Comment{$s[j{+}1..\mathit{end}]$ is a palindrome}
        \EndIf
    \EndFor
\State add $s[j{+}1..\mathit{end}]$ to $\mathit{factors}$; $\mathit{end}\gets j$
\EndFor
\State add $s[1..\mathit{end}]$ to $\mathit{factors}$ \Comment{after last assignment in line 6, $\mathit{end}\in \lst[1]$}
\end{algorithmic}
\end{algorithm*}

Algorithm \ref{a:fact} discovers factors of a $k'$-factorization of $s$ in reversed order, using an observation that if a string $w$ has even (odd) palindromic length $i$ then $w=uv$, where $u$ has odd (resp., even) palindromic length $i{-}1$ and $v$ is a nonempty palindrome. All lists are built in parallel in linear time. In the main cycle, each position serves as $j$ at most once, so the algorithm performs $O(n)$ $\len$ queries to the iterator. Thus we proved 

\begin{theorem}
There is a linear-time word-RAM algorithm which, given a string $s$ over a general alphabet and a number $k$, builds a palindromic $k$-factorization of $s$ or reports that no such factorization exists.
\end{theorem}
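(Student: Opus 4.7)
The plan is to use Algorithm~EOPL from Theorem~\ref{MainTheorem} as an $O(n)$ preprocessing step and then argue that Algorithm~\ref{a:fact} extracts a factorization in additional linear time. First I would run Algorithm~EOPL on $s$, which fills $\pl^0[0..n{-}1]$ and $\pl^1[0..n{-}1]$ in $O(n)$ time and leaves the palindromic iterator loaded with $s$. From these I would read off $k'=\pl^{k\bmod 2}[n{-}1]$: if $k'=\infty$ or $k'>k$ report that no palindromic $k$-factorization exists; otherwise Lemma~\ref{l:kvslen}(2) converts any $k'$-factorization into a $k$-factorization in $O(n)$ additional time, so it suffices to build a $k'$-factorization of $s$, which is the job of Algorithm~\ref{a:fact}.

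For correctness of Algorithm~\ref{a:fact}, I would argue by reverse induction on $i=k'{-}1,k'{-}2,\ldots,1$ with the invariant that at the start of the outer iteration indexed by $i$ one has $\pl^{(i+1)\bmod 2}[end]=i+1$ and the suffix $s[end{+}1..n{-}1]$ has already been decomposed into $k'{-}1{-}i$ palindromes stored in $factors$. The DP recurrence~\eqref{e:dp} then produces an index $l\in S_{end}$ with $\pl^{i\bmod 2}[l{-}1]=i$; hence $j=l{-}1$ lies in $\lst[i]$ and $s[j{+}1..end]$ is a palindrome, so the inner \textbf{for} loop is guaranteed to find a valid $j$, after which the invariant is restored by setting $end\gets j$. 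At termination $\pl^1[end]=1$, so the final piece $s[0..end]$ is itself a palindrome, prepended outside the main loop.

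The main obstacle will be the amortized bound on the double loop. Building the lists $\lst[1..k'{-}1]$ is $O(n)$ via a single bucket-style pass over $\pl^0$ and $\pl^1$. The key observation is that each position $j\in[0..n{-}1]$ belongs to at most two of the lists, namely $\lst[\pl^0[j]]$ and $\lst[\pl^1[j]]$ (either one may be absent if the corresponding palindromic length is $\infty$), because the parity of a list's index is forced by the value it reads. Hence $\sum_i|\lst[i]|\le 2n$. Since each $\lst[i]$ is scanned in exactly one outer iteration and every inner step costs one $O(1)$ $\len$ query to the iterator, the whole double loop runs in $O(n)$. Combining the $O(n)$ preprocessing, the $O(n)$ extraction, and the $O(n)$ transformation of Lemma~\ref{l:kvslen}(2) yields the claimed bound and completes the theorem.
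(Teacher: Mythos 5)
Your proposal is correct and follows essentially the same route as the paper: run Algorithm~EOPL, reduce to a $k'$-factorization via Lemma~\ref{l:kvslen}, and extract it backwards with Algorithm~\ref{a:fact}, bounding the total work by the $O(n)$ combined size of the lists $\lst[i]$ and the $O(1)$-time $\len$ queries. You merely spell out the invariant and the counting (each position lies in at most two lists) in more detail than the paper's terse argument, so no substantive difference.
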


\bibliography{new}

\section*{Appendix}

\begin{proof}[Proof of Lemma~\ref{l:series}]
The lemma appeared in \cite{BKRS17} without a proof (the proof had been put into Appendix, which remained unpublished), so we provide the proof here. It is based on two combinatorial facts about palindromes; these facts appeared, in this or similar form, in~\cite{BKRS17,FiciCo,ISugimotoInenagaBannaiTakeda,KosolobovPalk,RuSh18}.\\[2pt]
\textbf{Fact 1} (\cite[Lemmas 2, 3]{KosolobovPalk}). 
For any palindrome $w$ and any $p \in [1..|w|]$, the following conditions are equivalent: (1)~$p$~is a period of $w$, (2)~there are palindromes $u$, $v$ such that $|uv| = p$ and $w = (uv)^ru$ for some $r \ge 1$, (3)~$w[p{+}1..|w|]$ ($w[1..|w|{-}p]$) is a palindrome.\\
\textbf{Fact 2} (\cite[Lemma 7]{KosolobovPalk}). 
Suppose that $w = (uv)^ru$, where $r \ge 1$, $u$ and $v$ are palindromes, and $|uv|$ is the minimal period of $w$; then, the center of any subpalindrome $x$ of $w$ such that $|x| \ge |uv|{-}1$ coincides with the center of some $u$ or $v$ from the decomposition.

If $p=1$, statement 3) trivially holds, so let $p>1$. By Fact~1, the head $w$ of $U$ has the form $(uv)^qu$ for some $q\ge 1$ and some palindromes $u$ and $v$ such that $|uv| = p$ and the longest proper suffix-palindrome of $w$ is $(uv)^{q-1}u$. Applying Fact~1 iteratively, we finally get $U=\{(uv)^qu,\ldots,(uv)^{q-r+1}u\}$. Note that the only suffix-palindrome of $(uv)^2u$ with length ${\ge}|uv|$ is $uvu$ by Fact~2. Hence by Fact~1 the minimal period of $(uv)^2u$ equals $p$. So either $(uv)^2u\in U$ and thus $U$ satisfies 1) or 2), or $U=\{uvu\}$, also satisfying 2). 
\end{proof}

\begin{proof}[Proof of Lemma~\ref{l:left}]
Let $w=(uv)^ku$ be the head of the $p$-series (see Lemma~\ref{l:series}), $x$ be the center of $w$, and $z(uv)^ku$ be the longest suffix of $s[1..n]$ with period $p$ (for $s$ in Example~\ref{ex:left}, $u=\varepsilon$, $v=aba$, $x=n-5/2$, $z=a$). Then $z$ is a proper suffix of $uv$: if $z$ is longer, then $(uv)^{k+1}u$ is a suffix of $s[1..n]$ and thus $w=(uv)^ku$ is not the head of the $p$-series.

Hence the center $x_1$ of the prefix-palindrome $u$ of $w$ satisfies $\len(x_1)=2|z|+|u|$ (in Example~\ref{ex:left}, $x_1=n-11/2$, $\len(x_1)=|aa|=2$). Note that $|u|=\len(x)\bmod p$ and $x_1=2x-\cntr(|u|)$. Thus, $|z|$ and $\leftside[p]=n{-}\len(x){-}|z|$ are computed in $O(1)$ time.
\end{proof}

\subparagraph*{Series evolution.} Here we provide the proofs concerning the evolution of series described in Section~\ref{s:nlogn}, p.\,\pageref{p:evolution}. Some properties appeared in \cite{KosolobovPalk,BKRS17}. We start with the following lemma.
\begin{lemma} \label{l:uvu}
If $U$ is a $p$-series in $s[1..n]$ and $s[1..n{-}1]$ has no $p$-series, then $U=\{uvu\}$ for some palindromes $u,v\ne\varepsilon$.
\end{lemma}
\begin{proof}
Clearly $p>1$, since every string has a 1-series. Recall that a string $w$ is primitive if $w=v^k$ for an integer $k$ means $k=1$. If $p$ is the minimal period of $w$, every length-$p$ substring of $w$ is primitive.

Let $x$ be the center of $w=\head(p)$ in $s[1..n]$. Then $s[1..n{-}1]$ has a suffix-palindrome $w'$ of length $|w|-2$ centered at $x$. Since $w'$ has period $p$ but $s[1..n-1]$ has no $p$-series, $w'$ has some period $q<p$. If $|w'|\ge p+q-\gcd(p,q)$, then by the Fine--Wilf theorem \cite{FineWilf} $w'$ has the period $\gcd(p,q)$; hence the length-$p$ substrings of $w'$ are not primitive. This leads to a contradiction, since all length-$p$ substrings of $w$ are primitive. Therefore, $|w'|< p+q-\gcd(p,q)$ and thus $|w|<2p$. Hence we are in case 2) of Lemma~\ref{l:series} with $r=1$.
\end{proof}
Let $n$ be the number of the iteration at which a new $p$-series $U$ appears. ``New'' means that $s[1..n]$ has a $p$-series while the strings $s[1..\ell{+}p],\ldots, s[1..n{-}1]$ have no $p$-series, where $\ell=\leftside[p]+1$ is such that $s[\ell..n]$ is the longest $p$-periodic suffix of $s[1..n]$. (In other words, either this is the first ever $p$-series, or the period $p$ died since the previous $p$-series was seen.) By Lemma~\ref{l:uvu}, $U=\{uvu\}$. Let $s[\ell..n]=zuvu$. Since $uvuvu\notin U$, $uvuvu$ is not a suffix of $s[1..n]$; then $z$ is a (possibly empty) suffix of $uv$. The palindrome $zu\lvec{z}$ is a prefix of $zuvu$ and thus has the period $p$. However, $p$ is not the minimum period of $zu\lvec{z}$: otherwise we would have an earlier $p$-series within the same $p$-periodic substring, so $U$ would not be new.

At subsequent iterations, the period $p$ survives while the string which is appended to $s[1..n]$ is a prefix of $(vu)^r$ for some $r$. By Fact~2, all palindromes with minimum period $p$ are centered at the centers of some $u$'s and $v$'s. The distance between the centers of consecutive $u$ and $v$ is $\frac p2$. 

\smallskip
Consider the evolution of $U$. Let $x$ be the center of $uvu=\head(p)$. During next $|z|$ iterations, this center will survive, extending the palindrome to $zuvu\lvec{z}$. At this moment the suffix-palindrome centered at $x+\frac p2$ equals $zu\lvec{z}$ and thus has a period shorter than $p$ and does not belong to the $p$-series. Hence if $U_k$ denotes the $p$-series at $(n{+}k)$th iteration, one has $U_{|z|}=\{zuvu\lvec{z}\}$. Now note that at the next iteration $x$ dies: the letter added on the right extends $\lvec{z}$ to a longer prefix of $vu=\overleftarrow{uv}$, while the letter $s[\leftside[p]]$, preceding $z$, does not extend $z$ to a longer suffix of $uv$. Let $n{+}i$ be the number of this iteration (so $i=|z|+1$). Next we note that the suffix-palindrome centered at $x+\frac p2$ joins the $p$-series at the iteration $n{+}i{+}j$ for some $j\ge 0$. If $j=0$, then $p$-series survives at the $(n{+}i)$th iteration: it loses its only element but simultaneously gets another one. If $j>0$, the $p$-series dies and reappears $j$ iterations letter. Finally observe that $j<p-i$ by Lemma~\ref{l:uvu}, because at the $(n{+}p)$th iteration the suffix-palindrome centered at $x+\frac p2$ is $uvuvu$. Moreover, at the $(n{+}p)$th iteration the palindrome centered at $x{+}p$ joins the $p$-series, so
$U_p=\{uvuvu,uvu\}$. Further, at the $(n{+}p{+}i)$th iteration $x{+}\frac p2$ dies, but the series still contains the palindrome centered at $x{+}p$; palindromes centered at $x{+}\frac{3p}2$ and $x{+}2p$ join the series at the $(n{+}p{+}i{+}j)$th and $(n{+}2p)$th iterations  respectively. One gets $U_{2p}=\{uvuvuvu,uvuvu,uvu\}$, the next $p$ iterations follow the same pattern, and so on.

\begin{proof}[Proof of Lemma~\ref{l:storepre}]
We use dynamic arrays (like \emph{vectors} in C$^{++}$). Such an array has \emph{size} (space in use) and \emph{capacity} (allocated space). When the size increases and reaches the capacity, the latter doubles; this step rebuilds the array to provide a constant-time access to its elements and takes the time linear in its size. Thus, adding an element can be done in amortized $O(1)$ time, an existing element can be modified in $O(1)$ time, an array can be cleared by setting its size to 0, and the total allocated space is proportional to the maximum size reached.

For each array $\pre^j[p]$, we store integers $R_1$ and $R_2$ and dynamic arrays $F_1$ and $F_2$, both initially of size 0 and capacity 1. During its time-to-live, $p$-series is processed in three stages.\\
1. When a $p$-series appears (say, at $n$th iteration), we set $R_1$ to the residue class corresponding to the current iteration (see \eqref{e:pre}) and append $\pre^j[p][R_1]$ to $F_1$ ($R_1=i-1$ in terms of Fig.~\ref{f:evolve}; so $R_1<p-1$ and $R_1=1$ in Fig.~\ref{f:pre},\,\ref{f:evolve}). At the next iteration we add $\pre^j[p][R_1{-}1]$ to $F_1$, and so on until $\pre^j[p][0]$ is added. Detecting stage 1 for a live $p$-series: $\size(F_1)\le R_1<p-1$.\\
2. After adding $\pre^j[p][0]$, there is no $p$-series until the $(n{+}i{+}\ell)$th iteration (see Fig.~\ref{f:evolve}). At that iteration, we set $R_2=i{+}\ell{-}1$ ($R_2=2$ in Fig.~\ref{f:pre}) and start to fill $F_2$ in the same way as $F_1$. During each iteration from $(n{+}p)$ to $(n{+}p{+}i{-}1)$ we add to $F_2$ the minimum of the new value and the corresponding value in $F_1$. Detecting stage 2: $R_1<\min\{\size(F_1), p-1\}$; $(n{+}p)$th iteration is reached when $R_2-\size(F_2)=R_1$. \\
3. At the $(n{+}p{+}i)$th iteration we clear $F_1$, set $R_1$ to $p{-}1$, and start to fill $F_1$ with new values. At the iteration $n{+}2p{+}i{-}1$ the size of $F_1$ reaches $p$, so it contains the values for all residue classes. Starting from the iteration $n{+}2p{+}i$ we fill $F_1$ taking the minimum of the new value and the stored one. Detecting stage 3: $R_1=p-1$; $(n{+}p{+}i)$th iteration: $\size(F_1)=p$.

When we need to extract a particular element $\PRE[p][c]$, we take the minimum of an element of $F_1$ and an element of $F_2$. The numbers $R_1$ and $R_2$ allow one to compute in $O(1)$ time the positions of each element in its array or to establish that the element does not exist. This ensures the constant-time access to the elements. The total size of all arrays cannot exceed the total number of series in all prefixes of the input string, which is $O(n\log n)$ by Lemma~\ref{l:lognseries}. Thus the lemma is proved.
\end{proof}

\begin{proof}[Proof of Lemma~\ref{l:tableop}]
We gathered scattered facts on table operations from \cite{BKRS17} in one place with full proofs.

A chunk $A[1..h]$, $h\le t$, is stored as the tuple $(A[1],b_2,\ldots, b_h)$, where the 2-bit codes $b_i$ encode the difference $\pm 1$ or $0$ with the previous value; the fourth code is used to indicate the end of a chunk with less than $t$ elements. Consider the operations one by one.\\
1) \emph{Increment all elements of a chunk $A$}. It suffices to increment $A[1]$.\\
2) \emph{Extract an element: given a chunk $A$ and a number $\ell<|A|$, return the number $A[\ell]$.}  Create a table containing, for any $h<t$, the sum of values encoded by $h$ 2-bit codes. The table has less than $2^{2t}$ entries, each of them is a $(\log t+O(1))$-bit number, for a total size of $O(n^{1/4}\log n)$. Thus, the sum of all codes in a chunk is a table operation. To get $A[\ell]$, we add its result on $b_1,\ldots,b_\ell$ to $A[1]$.\\
3) \emph{Extract a chunk: given chunks $A,B$ such that $|B[1]-A[|A|]|\le 1$ and numbers $l,d\le t$, return the chunk $C=AB[l..r]$, where  $r=\min\{l{+}d{-}1,|AB|\}$ and $AB$ is the concatenation of $A$ and $B$.} Using the previous operation, extract $A[l]$ and $A[|A|]$; $C$ is encoded by $A[l]$ followed some final codes from $A$, the code of $B[1]-A[|A|]$, and some initial codes of $B$. (If $l{+}d\le|A|$, $B$ is not involved.)\\
4) \emph{Reverse a chunk: given a chunk $A$, return $\lvec{A}$.} Extract $A[|A|]$ and use a table to ``invert'' a sequence of 2-bit codes; the table contains $O(2^{2t})$ entries of size $O(2^{2t})$ each, for $O(n^{1/2})$ total size.\\
5) \emph{Concatenate two chunks: given chunks $A$ and $B$, return a chunk $C=AB$ or, if $|AB|>t$, two chunks $C,D$ such that $CD=AB$ and $|C|=t$.} Extract $A[|A|]$, getting $C$ as $A[1]$ plus codes from $A$, plus the code of $B[1]-A[|A|]$ and some initial codes from $B$. To get $D$ (if exists), extract the corresponding symbol of $B$ and append the rest of codes from $B$ to it.\\
6) \emph{Extend a chunk with dummy values: given a chunk $A$ and numbers $l,r$ such that $|A|+l+r\le t$, return a chunk $C$ of length $|A|{+}l{+}r$ such that $C[l{+}1..l{+}|A|]=A$, $C[i{+}1]=C[i]-1$ for each $i\le l$, and $C[i{+}1]=C[i]+1$ for each $i> l+|A|$.} Set $C[1]=A[1]+l$, append $l$ codes of -1 from the left and $r$ codes of 1 from the right to the codes in $A$.\\
7) \emph{Take the minimum of two chunks: given chunks $A,B$, $|A|=|B|$, return the chunk $C$ of length $|A|$ such that $C[i]=\min\{A[i],B[i]\}$ for each $i$.} If $|A[1]-B[1]|\ge 2t$, then $C$ is the chunk with the smaller number. Otherwise, $C[1]=\min\{A[1],B[1]\}$ and $|A[1]-B[1]|$ is a $(\log t+O(1))$-bit number. Construct a table which contains, for the given difference and two sequences of 2-bit codes, the sequence of 2-bit codes for $C$. The table has $O(2^{4t}\cdot t)$ entries with $O(2^{2t})$-bit values, for the total size of $O(n^{3/4}\log n)$ bits; so this is a table operation. 
\end{proof}

\begin{proof}[Proof of Lemma~\ref{l:tableop2}]
The argument is mostly the same as in the proof of Lemma~\ref{l:tableop}. A double chunk $A[1..h]$, $h\le t$,  is given by the numbers $A^0[1],A^0[2],A^1[1],A^2[2]$, and the 2-bit codes $b_3^1,\ldots,b_h^1,b_3^2,\ldots,b_h^2$. Extracting an element, we use a table of size $O(2^{4h})=O(n^{1/2})$ containing, for each sequence of bit codes, a tuple $(j_0,i_0,\mathit{diff}_0,j_1,i_1,\mathit{diff}_1)$, where $j_0,j_1\in\{0,1\},i_0,i_1\in\{1,2\},\mathit{diff}_0,\mathit{diff}_1=O(h)$. The extracted element is  $(A^{j_0}[i_0]+\mathit{diff}_0,A^{j_1}[i_1]+\mathit{diff}_1)$. As can be easily seen from the proof of Lemma~\ref{l:tableop}, the only other operation that cannot be reduced in $O(1)$ time to the extraction of one or two elements is taking the minimum. For the minimum, we use the following observation: in a smoothed chunk, if a 2-bit code represents a value of $f_j$, it refers to the previous value of $f_j$. So we proceed with three table functions. The first function takes the sequence of all codes from a smoothed chunk and parses it into two sequences corresponding to $f_1$ and $f_2$ respectively; we then reconstruct $f_1$ and $f_2$ as chunks (not \emph{double} chunks but the ones from Lemma~\ref{l:tableop}). The second function is the minimum function from Lemma~\ref{l:tableop}; finally, the third function reconstructs the code sequence of a smoothed chunk from the code sequences of its functions $f_1$ and $f_2$. The tables for the first and the third function require $O(n^{1/2}\log n)$ bits each.
\end{proof}

\end{document}